\newfont{\fsc}{eusm10 scaled 1100}      %
\newfont{\bbb}{bbm10 scaled 1100}       %
\def\section{\@startsection{section}{1}{0pt}{-3.25ex plus -1ex minus 
-.2ex}{1.5ex plus .2ex minus .3ex}{\normalfont\large\bf}}
\renewcommand\subsection{\@startsection{subsection}{2}{\z@}%
                                     {-3.25ex\@plus -1ex \@minus -.2ex}%
                                     {1.5ex \@plus .2ex}%
                                     {\normalfont\normalsize\bfseries}}
\theoremstyle{plain}
\newtheorem{lemma}{Lemma}
\newtheorem{theorem}{Theorem}
\newenvironment{proof}{\begin{trivlist}\item{}\normalfont\textit{Proof.}}{\hfill$\square$\end{trivlist}}
\newcommand{\defn}[1]{{\textit{\textbf{#1}}}}
\newcommand{\ie}{\emph{i.e.}}
\newcommand{\Ie}{\emph{I.e.}}
\newcommand{\nth}{^{\text{th}}}
\newcommand{\leftNoCarry}[1]{\circ\;#1\;\;\;}
\newcommand{\rightNoCarry}[1]{\;\;\;#1\;\circ}
\newcommand{\leftCarry}[1]{\bullet\;#1\;\;\;}
\newcommand{\rightCarry}[1]{\;\;\;#1\;\bullet}
\newcommand{\varone}{\underline{1}}
\newcommand{\varzero}{\underline{0}}
\newcommand{\dunno}{?}
\newcommand{\wolframcarry}{\begin{picture}(6,0)\put(3,2){\makebox(0,0){$\bullet$}}\put(3.09,0){\makebox(0,0){\tiny$\blacktriangledown$}}\end{picture}}
\newcommand{\wolframnocarry}{\begin{picture}(6,0)\put(3,2){\makebox(0,0){$\bullet$}}\put(3.09,4.4){\makebox(0,0){\tiny$\blacktriangle$}}\end{picture}}
\newcommand{\myitem}[1]{\item[\normalfont(#1)]}
\newcommand{\repleft}[1]{\raisebox{.3ex}{\footnotesize$\Leftarrow$}{#1}}
\newcommand{\repright}[1]{{#1}\raisebox{.3ex}{\footnotesize$\Rightarrow$}}
\newcommand{\repleftcell}[1]{\cell{\makebox[0ex][r]{\raisebox{.3ex}{\footnotesize$\Leftarrow$}}{#1}}}
\newcommand{\reprightcell}[1]{\cell{{#1}\makebox[0ex][l]{\raisebox{.3ex}{\footnotesize$\Rightarrow$}}}}
\newcommand{\cellw}{2.2ex}
\newcommand{\cell}[1]{\makebox[\cellw]{$#1$}}
\newcommand{\cells}{\hspace*{-1ex}\begin{array}{c}}
\newcommand{\cellsleft}{\hspace*{-1ex}\begin{array}{@{\hspace*{-30ex}}l@{\hspace*{-30ex}}}}
\newcommand{\cellsend}{\end{array}\hspace*{-1ex}}
\newcommand{\nextrow}{\\[.5ex]}
\newcommand{\nocarry}[1]{\overset{\circ}{\raisebox{0ex}[1.4ex]{$#1$}}}
\newcommand{\tapedata}{I}%
\newcommand{\future}[1]{\mbox{\tiny$\swarrow$\hspace*{-.2ex}} #1\mbox{\tiny\hspace*{-.2ex}$\searrow$}}
\newcommand{\codedtaginput}{I}
\newcommand{\Z}{\mathbb{Z}}
\newcommand{\N}{\mbox{\bbb N}}
\newcommand{\leftseed}[1]{{}^{\mbox{\tiny$\ulcorner$}\hspace*{-.8ex}}#1}
\newcommand{\rightseed}[1]{#1^{\hspace*{-.2ex}\mbox{\tiny$\urcorner$}}}
\newcommand{\leftstem}[1]{{}_{\mbox{\tiny$\llcorner$}\hspace*{-.55ex}}#1}
\newcommand{\rightstem}[1]{#1_{\hspace*{-.4ex}\mbox{\tiny$\lrcorner$}}}
\newcommand{\one}{1}
\newcommand{\zero}{0}
\newcommand{\tinysym}[1]{\raisebox{.2ex}{\tiny #1}}
\newcommand{\tinyzero}{\tinysym 0}
\newcommand{\tinyone}{\tinysym 1}
\newcommand{\tinyvarzero}{\tinysym\varzero}
\newcommand{\tinyvarone}{\tinysym\varone}
\newcommand{\tinydunno}{\tinysym\dunno}
\newcommand{\fc}[1]{\makebox[1.8ex]{$#1$}}
\newcommand{\tc}[1]{\makebox[1.2ex]{$#1$}}
\newcommand{\remainder}{\%}%
\newcommand{\matr}[1]{\widehat{#1}}
\newcommand{\leftwraptraj}[1]{\overleftarrow {T_{#1}}}
\newcommand{\rightseedtraj}[1]{\overrightarrow {T_{#1}}}
\newcommand{\concat}{\textsf{;}}
\newcommand{\lefttrajcomp}[1]{\overleftarrow{#1}}
\newcommand{\rightwraptraj}[1]{\overrightarrow {T_{#1}}}
\newcommand{\righttrajcomp}[1]{\overrightarrow{#1}}
\newcommand{\rev}[1]{{#1}^{\mathsf{rev}}}
\newcommand{\red}[1]{|#1|}
\newcommand{\emulation}[1]{\widehat{#1}}
\title{\vspace*{-5.5ex}\large
\textbf{Is Wolfram and Cook's (2,5) Turing machine \emph{really} universal?}
\author{\\[-3ex]
Dominic J.\ D.\ Hughes
\thanks{Visiting Scholar, Computer Science Department, 353 Serra Mall, Stanford University, Stanford CA 94305, USA.}
\\[.2ex]
\normalsize \sc Stanford University\\
\normalsize November 5, 2007}\date{}}
\begin{document}
\maketitle\vspace{-2ex}
\begin{center}\bf Abstract\vspace{-.8ex}\end{center}
\begin{quotation}\small

Wolfram \cite[p.\,707]{W} and Cook \cite[p.\,3]{C} claim to prove that
a (2,5) Turing machine (2 states, 5 symbols) is universal, via a
universal cellular automaton known as \textsl{Rule 110}.
The first part of this paper points out a critical gap in their
argument.
The second part bridges the gap, thereby giving what appears to be the
first proof of universality.

\end{quotation}

\section{The claim}

\thispagestyle{empty}
Wolfram \cite[p.\,707]{W} and Cook \cite[p.\,3]{C} claim to prove that
the Turing Machine $M$ 
with the following table is universal:
\begin{center}\label{table}\vspace{-2.5ex}\hspace*{-1ex}\small%
\begin{tabular}{r|cc}
 \multicolumn{3}{c}{\hspace{3ex}State} \\
      & $\circ$  &  $\bullet$     \\ \cline{1-3}
 $0$       & \rule{0ex}{2.8ex}$\leftNoCarry {\varzero}$  &  $\leftNoCarry \varone$    \\
 $1$       & \rule{0ex}{2.5ex}$\leftCarry {\varone}$  &  $\leftCarry \dunno$    \\ 
 \begin{picture}(0,0)\put(-42,-5){\shortstack{Input\\symbol}}\end{picture}$\varzero$  
           &  \rule{0ex}{2.5ex}$\rightNoCarry 0$  &  $\rightNoCarry 0$    \\
 $\varone$ &  \rule{0ex}{2.5ex}$\rightCarry 1$  &  $\rightCarry 1$    \\
 $\dunno$  &  \rule{0ex}{2.5ex}$\rightNoCarry 0$  &  $\rightNoCarry 1$    
\end{tabular}\end{center}
Here table entry ``$\:\bullet\: \dunno\:$'' means ``write $\dunno$ and
move left into state $\bullet$'', entry ``$\:1\,\circ\:$'' means ``write
$1$ and move right into state $\circ$'', \textit{etc}.\footnote{Cook
\cite{C} writes $0^2$ for $\varzero$, $1^2$ for $\dunno$ and
$\neq$ for $\varone$.
Wolfram on p.\,707 \cite{W} uses shades of grey (white, light grey,
medium grey, dark grey, black for $0,\varzero,\dunno,\varone,1$,
resp.), enumerated $0\ldots 4$ from light to dark on p.\,1119.
The states $\bullet\,\circ$ are $S_E\,S_O$ (resp.) for Cook, and
$\wolframcarry\,\wolframnocarry$ (resp.) for Wolfram.
The reason for our notational choice becomes clear with the example in
Section~\ref{plug}.}
For ease of reference, we collect together the passages from these
works which together constitute the claimed proof of universality:
\begin{enumerate}
\item \textnormal{\bf Wolfram, p.\,707.}\sl{}
\ldots 
by using the universality of rule 110 it turns out to be possible to
come up with the vastly simpler Turing machine shown below---with just
2 states and 5 possible colors.
\item \textnormal{\bf Wolfram, p.\,707, first caption.}\sl{}
The rule for the simplest Turing machine currently known to be
universal, based on discoveries in this book.  The machine has 2
states and 5 possible colors.
\item \textnormal{\bf Wolfram, p.\,707, second caption.}\sl{}
An example of how the Turing machine above manages to emulate rule 110.
\item \textnormal{\bf Wolfram, p.\,708.}\sl{}
As the picture at the bottom of the previous page illustrates, this
Turing machine emulates rule 110 in a quite straightforward way: its
head moves systematically backwards and forwards, at each complete
sweep updating all cells according to a single step of rule 110
evolution.  And knowing from earlier in this chapter that rule 110 is
universal, it then follows that the 2-state 5-color Turing machine
must also be universal.
\item \textnormal{\bf Wolfram, p.\,1119 (note to p.\,707).}\sl{}
\textit{Rule 110 Turing machines.} Given an initial condition for rule 110, 
the initial condition for the Turing machine shown here is obtained as
\textit{Prepend[list, 0]} with $\varzero$'s on the left and $0$'s on the right.\footnote{%
This item is not
a strictly verbatim quote: to match our notation for the Turing machine
symbols, we have substituted ``\textsl{list}'' for Wolfram's original
``\textsl{4 list}'', and ``$\varzero$'s on the left'' for ``$1$'s on
the left''.
Wolfram uses the Turing machine tape symbol `$4$' (depicted as a
solid black square) to correspond to the cellular automaton's $1$,
while we (like Cook) use the tape symbol `$1$'; where he
writes the tape symbol `$1$' (depicted as a light grey square), we write
`$\varzero$' (Cook's `$0^2$').}
\item \textnormal{\bf Cook, p.\,3.}\sl{} 
\ldots we can
construct Turing machines that are universal because they can emulate
the behavior of Rule 110.  These machines, shown in Figure~1, are far
smaller than any previously known universal Turing machines.
\item \textnormal{\bf Cook, p.\,4, caption to Figure 1.}\sl{}
Some small Turing machines which are universal due to being able to
emulate the behavior of Rule 110 by going back and forth over an ever
wider stretch of tape, each time computing one more step of Rule 110's
activity.
\end{enumerate}
This list is
the full extent of the Wolfram-Cook universality argument (aside from
Wolfram's depiction of 
an example run of the Turing
machine, p.\,707).\footnote{If there are additional details somewhere in
\cite{C,W}, they are not easy to find.  In addition, after extensive
web search, we were unable to find a universality proof.  Note: the
reader should not confuse the \emph{proof of universality of rule 110}
in Cook \cite{C}, which is laid out in full, with the \emph{claimed
proof of universality of the (2,5) Turing machine}, the full extent of
which is items 1--7 above.}
They attempt to argue as follows:
\begin{itemize}
\myitem{1} The Turing machine $M$ emulates the following cellular automaton $R\,$ (``\emph{Rule 110}''):
$$\begin{array}{|c|c|c|c|c|c|c|c|}
\hline
000&001&010&011&100&101&110&110\\
 0 & 1 & 1 & 1 & 0 & 1 & 1 & 0 \\
\hline
\end{array}\;;\vspace{-1em}$$
\myitem{2} $R$ is universal;
\myitem{3} hence $M$ is universal.
\end{itemize}

\section{The gap}\label{gap}

Unfortunately, there is a critical gap in their attempted argument.
Wolfram (item 5 above) defines the emulation of $R$ on initial
configurations 
\begin{center}
\vspace*{-1.3ex}
$\repleft0\;\;\tapedata\;\;\repright0$
\end{center}
where $\tapedata$ is a word (finite sequence of $1$'s and $0$'s), and 
$\repleft w$ (resp.\ $\repright w$) denotes the infinite repetition of
a word or symbol $w$ towards the left (resp.\ right).
However,
Wolfram and Cook demonstrated the universality of $R$
via 
initial configurations
\begin{center}
$\repleft X\;\;\tapedata\;\;\repright Y$
\end{center}
for words $X$ and $Y\mkern-5mu,$ neither constantly $0\,$.
In the former, there is an infinite stream of $0$'s either side of the
input $I$, hence a finite number of $1$'s in total. In the latter,
there are infinitely many $1$'s either side of $I$.
This breakdown in reasoning
begs the question: is
their $(2,5)$ Turing machine $M$ \emph{really}
universal?\footnote{Naive attempts to bridge the gap from $\repleft0\,
\tapedata\,\repright0$ emulations to $\repleft X\,\tapedata\,\repright Y$
emulations fail.  For example, one could
run the emulation on 
$\repleft0 \,X^n\,\tapedata\,Y^n\,\repright 0$, where $W^n$ denotes $n$
repetitions of $W$, the idea being that $X^n$ and $Y^n$ might contain
enough gliders/particles \cite{W,C} to complete the computation.
However, because of the halting problem, we can never predict how
large $n$ will need to be.  Accordingly, we could resort to repeating the
$\repleft0 \,X^n\,\tapedata\,Y^n\,\repright 0$ emulation again and
again, with progressively larger $n$, since if the target computation
on $\repleft X\,\tapedata\,\repright Y$ completes, then some $n$ will
be large enough that
$\repleft0 \,X^n\,\tapedata\,Y^n\,\repright 0$
completes in a
corresponding manner.  Alas, the \emph{deus ex machina}
(repeatedly restarting the Turing machine) destroys any possible
claim of universality.}

\section{The solution}\label{plug}

This section bridges the critical gap in the Wolfram-Cook argument, yielding what
appears to be the first proof of universality of their 2-color 5-symbol
Turing machine $M$.

Recall (item 5 above) that Wolfram defined emulation on initial configurations
\begin{center}
$\repleft0\;\;\tapedata\;\;\repright0$
\vspace*{-1.5ex}\end{center}
by running $M$ on the initial tape
\begin{center}\vspace*{-1.5ex}$
\repleft\varzero \;\; 0 \;\; \tapedata \;\; \repright 0
$\vspace*{-.8ex}\end{center}
The initial state and position of the tape head is not specified.
However, it is easy to see that it suffices to
start in state $\circ$ (Wolfram's $\wolframnocarry$) on the cell
immediately to the right of $\tapedata$:\footnote{This is consistent
with Wolfram's example depicted on p.\,707.}
\begin{center}\vspace*{-.5ex}$
\repleft\varzero \;\; 0 \;\; \tapedata \;\; \nocarry0 \;\;\repright 0
$\vspace*{-.8ex}\end{center}
$M$ simulates 
the cellular automaton $R$ thus: if $M(i)$ is the state of the tape
when the head first reaches the $i\nth$ cell right of $\tapedata$ (the
cell immediately adjacent to $I$ being counted as the $0\nth$), then
the $i\nth$ state of the cellular automaton $R$ is obtained from
$M(i)$ by replacing $\varzero$'s by $0$'s.
For example, if $\tapedata$ is $111011$
then $M(0)$ to $M(3)$ are
\begin{center}\vspace*{-3.5ex}
$\cells
\repleftcell\varzero
\cell 0
\cell 1
\cell 1
\cell 1
\cell 0
\cell 1
\cell 1
\cell {\nocarry 0}
\reprightcell 0
\nextrow
\repleftcell\varzero
\cell 0
\cell 1
\cell 1
\cell 0
\cell 1
\cell 1
\cell 1
\cell 1
\cell 0
\cell {\nocarry 0}
\reprightcell 0
\nextrow
\repleftcell \varzero
\cell 0
\cell 1
\cell 1
\cell 1
\cell 1
\cell 1
\cell 0
\cell 0
\cell 1
\cell 0
\cell 0
\cell {\nocarry 0}
\reprightcell 0
\nextrow
\repleftcell \varzero
\cell 0
\cell 1
\cell 1
\cell 0
\cell 0
\cell 0
\cell 1
\cell 0
\cell 1
\cell 1
\cell 0
\cell 0
\cell 0
\cell {\nocarry 0}
\reprightcell 0
\cellsend$
\vspace*{-.7ex}\end{center}
clearly emulating $R$.
The head sweeps repeatedly left and right, reaching one cell further
each time.
It is instructive to see the tape at the end of each leftward
sweep:
\begin{center}\vspace*{-1ex}
$\cells
\repleftcell \varzero
\cell 0
\cell 1
\cell 1
\cell 1
\cell 0
\cell 1
\cell 1
\cell {\nocarry 0}
\reprightcell 0
\nextrow
\repleftcell \varzero
\cell {\nocarry {\varzero}}
\cell \varone
\cell \dunno
\cell \dunno
\cell \varone
\cell \varone
\cell \dunno
\cell \varone
\cell \varzero
\reprightcell 0
\cell {}
\nextrow
\repleftcell \varzero
\cell 0
\cell 1
\cell 1
\cell 0
\cell 1
\cell 1
\cell 1
\cell 1
\cell 0
\cell {\nocarry 0}
\reprightcell 0
\nextrow
\repleftcell \varzero
\cell {\nocarry \varzero}
\cell \varone
\cell \dunno
\cell \varone
\cell \varone
\cell \dunno
\cell \dunno
\cell \dunno
\cell \varone
\cell \varzero
\cell \varzero
\reprightcell 0
\cell {}
\nextrow
\repleftcell \varzero
\cell 0
\cell 1
\cell 1
\cell 1
\cell 1
\cell 1
\cell 0
\cell 0
\cell 1
\cell 0
\cell 0
\cell {\nocarry 0}
\reprightcell 0
\nextrow
\repleftcell \varzero
\cell {\nocarry \varzero}
\cell \varone 
\cell \dunno
\cell \dunno
\cell \dunno
\cell \dunno
\cell \varone
\cell \varzero
\cell \varone
\cell \varone
\cell \varzero
\cell \varzero
\cell \varzero
\reprightcell 0
\cell {}
\nextrow
\repleftcell \varzero
\cell 0
\cell 1
\cell 1
\cell 0
\cell 0
\cell 0
\cell 1
\cell 0
\cell 1
\cell 1
\cell 0
\cell 0
\cell 0
\cell {\nocarry 0}
\reprightcell 0
\cellsend$
\vspace*{-1ex}\end{center}
Each leftward sweep tries to update a cell according to its right
neighbour only; if the status of the cell
cannot be determined, the head writes `$\dunno$'.  Each rightward sweep
resolves the $\dunno$'s.\footnote{This interpretation clarifies why we chose
the notation $0\,\varzero\,\dunno\,\varone\,1$ for the symbols of the Turing machine.}
The state acts as a single bit `carry', memorising the relevant
neighbour, and is set to $\bullet$
after reading a $1$ or $\varone$.

Wolfram and Cook demonstrate the universality of the cellular automaton
$R$ by emulating a universal cyclic tag system $U$.
Given an input word $J$ to $U$, they compute a word
$\codedtaginput=\codedtaginput(J)$ (a simple substitution of words for
symbols) and run $R$ with the initial configuration
\begin{center}\vspace*{-1ex}
$\repleft X\;\;\codedtaginput\;\;\repright Y$
\vspace*{-1ex}\end{center}\nopagebreak[4]
for particular words $X$ and $Y$, seeding the infinite repetitions
$\repleft X$ and $\repright Y$.  These words remain fixed for
different $\codedtaginput$ (possible since $R$ emulates a fixed
universal cyclic tag system).\pagebreak[4]

We shall construct words $\leftseed X$, $\leftstem X$, $\rightstem Y$
and $\rightseed Y$ (by the \defn{wrap construction} introduced below)
such that for all input words $I$, the Turing machine $M$ with initial
tape
$$\repleft {\leftseed X}
\;\;
\leftstem{ X}
\:\;I\;\,\rightstem Y\,\;\repright {\rightseed Y}$$
emulates $R$ with initial state
\begin{center}\vspace*{-2ex}
$\repleft X \;\;\codedtaginput \;\;\repright Y$
\vspace*{-.3ex}\end{center}
on the causal future of $I$ in the space-time diagram of $R$.  For example,
with\footnote{Here $X$ and $Y$ are not the words used by Wolfram and
Cook to prove the universality of $R$.  For illustrative purposes,
simpler words are used in this example.  Our emulation works for
\emph{any} $X$ and $Y$ that do not evolve to a sequence of all 0s (including the Wolfram-Cook words).}
\begin{displaymath}
X \;\;=\;\; 111011 \hspace*{9ex} I \;\;=\;\; 10011 \hspace*{9ex} Y
\;\;=\;\; 1101
\end{displaymath}
the evolution of $R$ begins 
\begin{center}\small$\cellsleft
\fc\tinyone \fc\tinyone \fc\tinyone \fc\tinyzero \fc\tinyone \fc\tinyone \fc\tinyone \fc\tinyone \fc\tinyone \fc\tinyzero \fc\tinyone \fc\tinyone \fc\tinyone \fc\tinyone \fc\tinyone \fc\tinyzero \fc\tinyone \fc\tinyone \fc\tinyone \fc\tinyone \fc\tinyone \fc\tinyzero \fc\tinyone \fc\tinyone \fc\one \fc\zero \fc\zero \fc\one \fc\one \fc\tinyone \fc\tinyone \fc\tinyzero \fc\tinyone \fc\tinyone \fc\tinyone \fc\tinyzero \fc\tinyone \fc\tinyone \fc\tinyone \fc\tinyzero \fc\tinyone \fc\tinyone \fc\tinyone \fc\tinyzero \fc\tinyone \fc\tinyone \fc\tinyone \fc\tinyzero \fc\tinyone \fc\tinyone \fc\tinyone \fc\tinyzero \fc\tinyone \\ 
 
\fc{}\fc\tinyzero \fc\tinyone \fc\tinyone \fc\tinyone \fc\tinyzero \fc\tinyzero \fc\tinyzero \fc\tinyone \fc\tinyone \fc\tinyone \fc\tinyzero \fc\tinyzero \fc\tinyzero \fc\tinyone \fc\tinyone \fc\tinyone \fc\tinyzero \fc\tinyzero \fc\tinyzero \fc\tinyone \fc\tinyone \fc\tinyone \fc\zero \fc\one \fc\zero \fc\one \fc\one \fc\zero \fc\zero \fc\tinyone \fc\tinyone \fc\tinyone \fc\tinyzero \fc\tinyone \fc\tinyone \fc\tinyone \fc\tinyzero \fc\tinyone \fc\tinyone \fc\tinyone \fc\tinyzero \fc\tinyone \fc\tinyone \fc\tinyone \fc\tinyzero \fc\tinyone \fc\tinyone \fc\tinyone \fc\tinyzero \fc\tinyone \fc\tinyone \\ 
 
\fc{}\fc{}\fc\tinyone \fc\tinyzero \fc\tinyone \fc\tinyzero \fc\tinyzero \fc\tinyone \fc\tinyone \fc\tinyzero \fc\tinyone \fc\tinyzero \fc\tinyzero \fc\tinyone \fc\tinyone \fc\tinyzero \fc\tinyone \fc\tinyzero \fc\tinyzero \fc\tinyone \fc\tinyone \fc\tinyzero \fc\one \fc\one \fc\one \fc\one \fc\one \fc\one \fc\zero \fc\one \fc\one \fc\tinyzero \fc\tinyone \fc\tinyone \fc\tinyone \fc\tinyzero \fc\tinyone \fc\tinyone \fc\tinyone \fc\tinyzero \fc\tinyone \fc\tinyone \fc\tinyone \fc\tinyzero \fc\tinyone \fc\tinyone \fc\tinyone \fc\tinyzero \fc\tinyone \fc\tinyone \fc\tinyone \\ 
 
\fc{}\fc{}\fc{}\fc\tinyone \fc\tinyone \fc\tinyzero \fc\tinyone \fc\tinyone \fc\tinyone \fc\tinyone \fc\tinyone \fc\tinyzero \fc\tinyone \fc\tinyone \fc\tinyone \fc\tinyone \fc\tinyone \fc\tinyzero \fc\tinyone \fc\tinyone \fc\tinyone \fc\one \fc\one \fc\zero \fc\zero \fc\zero \fc\zero \fc\one \fc\one \fc\one \fc\one \fc\one \fc\tinyone \fc\tinyzero \fc\tinyone \fc\tinyone \fc\tinyone \fc\tinyzero \fc\tinyone \fc\tinyone \fc\tinyone \fc\tinyzero \fc\tinyone \fc\tinyone \fc\tinyone \fc\tinyzero \fc\tinyone \fc\tinyone \fc\tinyone \fc\tinyzero \\ 
 
\fc{}\fc{}\fc{}\fc{}\fc\tinyone \fc\tinyone \fc\tinyone \fc\tinyzero \fc\tinyzero \fc\tinyzero \fc\tinyone \fc\tinyone \fc\tinyone \fc\tinyzero \fc\tinyzero \fc\tinyzero \fc\tinyone \fc\tinyone \fc\tinyone \fc\tinyzero \fc\zero \fc\zero \fc\one \fc\zero \fc\zero \fc\zero \fc\one \fc\one \fc\zero \fc\zero \fc\zero \fc\zero \fc\one \fc\tinyone \fc\tinyone \fc\tinyzero \fc\tinyone \fc\tinyone \fc\tinyone \fc\tinyzero \fc\tinyone \fc\tinyone \fc\tinyone \fc\tinyzero \fc\tinyone \fc\tinyone \fc\tinyone \fc\tinyzero \fc\tinyone \\ 
 
\fc{}\fc{}\fc{}\fc{}\fc{}\fc\tinyzero \fc\tinyone \fc\tinyzero \fc\tinyzero \fc\tinyone \fc\tinyone \fc\tinyzero \fc\tinyone \fc\tinyzero \fc\tinyzero \fc\tinyone \fc\tinyone \fc\tinyzero \fc\tinyone \fc\zero \fc\zero \fc\one \fc\one \fc\zero \fc\zero \fc\one \fc\one \fc\one \fc\zero \fc\zero \fc\zero \fc\one \fc\one \fc\zero \fc\tinyone \fc\tinyone \fc\tinyone \fc\tinyzero \fc\tinyone \fc\tinyone \fc\tinyone \fc\tinyzero \fc\tinyone \fc\tinyone \fc\tinyone \fc\tinyzero \fc\tinyone \fc\tinyone \\ 
 
\fc{}\fc{}\fc{}\fc{}\fc{}\fc{}\fc\tinyone \fc\tinyzero \fc\tinyone \fc\tinyone \fc\tinyone \fc\tinyone \fc\tinyone \fc\tinyzero \fc\tinyone \fc\tinyone \fc\tinyone \fc\tinyone \fc\one \fc\zero \fc\one \fc\one \fc\one \fc\zero \fc\one \fc\one \fc\zero \fc\one \fc\zero \fc\zero \fc\one \fc\one \fc\one \fc\one \fc\one \fc\tinyzero \fc\tinyone \fc\tinyone \fc\tinyone \fc\tinyzero \fc\tinyone \fc\tinyone \fc\tinyone \fc\tinyzero \fc\tinyone \fc\tinyone \fc\tinyone \\ 
 
\fc{}\fc{}\fc{}\fc{}\fc{}\fc{}\fc{}\fc\tinyone \fc\tinyone \fc\tinyzero \fc\tinyzero \fc\tinyzero \fc\tinyone \fc\tinyone \fc\tinyone \fc\tinyzero \fc\tinyzero \fc\zero \fc\one \fc\one \fc\one \fc\zero \fc\one \fc\one \fc\one \fc\one \fc\one \fc\one \fc\zero \fc\one \fc\one \fc\zero \fc\zero \fc\zero \fc\one \fc\one \fc\tinyone \fc\tinyzero \fc\tinyone \fc\tinyone \fc\tinyone \fc\tinyzero \fc\tinyone \fc\tinyone \fc\tinyone \fc\tinyzero \\ 
 
\fc{}\fc{}\fc{}\fc{}\fc{}\fc{}\fc{}\fc{}\fc\tinyone \fc\tinyzero \fc\tinyzero \fc\tinyone \fc\tinyone \fc\tinyzero \fc\tinyone \fc\tinyzero \fc\zero \fc\one \fc\one \fc\zero \fc\one \fc\one \fc\one \fc\zero \fc\zero \fc\zero \fc\zero \fc\one \fc\one \fc\one \fc\one \fc\zero \fc\zero \fc\one \fc\one \fc\zero \fc\one \fc\tinyone \fc\tinyone \fc\tinyzero \fc\tinyone \fc\tinyone \fc\tinyone \fc\tinyzero \fc\tinyone \\ 
 
\fc{}\fc{}\fc{}\fc{}\fc{}\fc{}\fc{}\fc{}\fc{}\fc\tinyzero \fc\tinyone \fc\tinyone \fc\tinyone \fc\tinyone \fc\tinyone \fc\zero \fc\one \fc\one \fc\one \fc\one \fc\one \fc\zero \fc\one \fc\zero \fc\zero \fc\zero \fc\one \fc\one \fc\zero \fc\zero \fc\one \fc\zero \fc\one \fc\one \fc\one \fc\one \fc\one \fc\zero \fc\tinyone \fc\tinyone \fc\tinyone \fc\tinyzero \fc\tinyone \fc\tinyone \\ 
 
\fc{}\fc{}\fc{}\fc{}\fc{}\fc{}\fc{}\fc{}\fc{}\fc{}\fc\tinyone \fc\tinyzero \fc\tinyzero \fc\tinyzero \fc\one \fc\one \fc\one \fc\zero \fc\zero \fc\zero \fc\one \fc\one \fc\one \fc\zero \fc\zero \fc\one \fc\one \fc\one \fc\zero \fc\one \fc\one \fc\one \fc\one \fc\zero \fc\zero \fc\zero \fc\one \fc\one \fc\one \fc\tinyzero \fc\tinyone \fc\tinyone \fc\tinyone \\ 
 
\fc{}\fc{}\fc{}\fc{}\fc{}\fc{}\fc{}\fc{}\fc{}\fc{}\fc{}\fc\tinyzero \fc\tinyzero \fc\one \fc\one \fc\zero \fc\one \fc\zero \fc\zero \fc\one \fc\one \fc\zero \fc\one \fc\zero \fc\one \fc\one \fc\zero \fc\one \fc\one \fc\one \fc\zero \fc\zero \fc\one \fc\zero \fc\zero \fc\one \fc\one \fc\zero \fc\one \fc\one \fc\tinyone \fc\tinyzero \\ 
 
\fc{}\fc{}\fc{}\fc{}\fc{}\fc{}\fc{}\fc{}\fc{}\fc{}\fc{}\fc{}\fc\one \fc\one \fc\one \fc\one \fc\one \fc\zero \fc\one \fc\one \fc\one \fc\one \fc\one \fc\one \fc\one \fc\one \fc\one \fc\one \fc\zero \fc\one \fc\zero \fc\one \fc\one \fc\zero \fc\one \fc\one \fc\one \fc\one \fc\one \fc\zero \fc\one \\ 
 
\cellsend$\end{center}
where larger symbols emphasise the causal future of $I$.
The wrap construction will turn out to yield
\begin{displaymath}
\leftseed  X \;\;=\;\; 11\varone 0 \varzero 0\varone 011 \hspace*{7ex}
\leftstem   X \text{ empty}
\hspace*{7ex}
\rightstem  Y \;\;=\;\; \varone\varone \hspace*{7ex}
\rightseed Y \;\;=\;\; \varzero 1 
\end{displaymath}
so we start the Turing machine $M$ with initial tape
$\;\repleft {\leftseed X}
\;
\leftstem{X}
\;I\,\rightstem Y\,\repright {\rightseed Y}\,$, a segment of which is
\begin{center}\vspace*{-.5ex}\small$\cellsleft
\tc\tinyone \tc\tinyone \tc\tinyvarone \tc\tinyzero \tc\tinyvarzero \tc\tinyzero \tc\tinyvarone \tc\tinyzero \tc\tinyone \tc\tinyone \tc\tinyone \tc\tinyone \tc\tinyvarone \tc\tinyzero \tc\tinyvarzero \tc\tinyzero \tc\tinyvarone \tc\tinyzero \tc\tinyone \tc\tinyone \tc\tinyone \tc\tinyone \tc\tinyvarone \tc\tinyzero \tc\tinyvarzero \tc\tinyzero \tc\tinyvarone \tc\tinyzero \tc\tinyone \tc\tinyone \tc\tinyone \tc\tinyone \tc\tinyvarone \tc\tinyzero \tc\tinyvarzero \tc\tinyzero \tc\tinyvarone \tc\tinyzero \tc\tinyone \tc\tinyone \tc\one \tc\zero \tc\zero \tc\one \tc\one \nocarry{\tc\tinyvarone} \tc\tinyvarone \tc\tinyvarzero \tc\tinyone \tc\tinyvarzero \tc\tinyone \tc\tinyvarzero \tc\tinyone \tc\tinyvarzero \tc\tinyone \tc\tinyvarzero \tc\tinyone \tc\tinyvarzero \tc\tinyone \tc\tinyvarzero \tc\tinyone \tc\tinyvarzero \tc\tinyone \tc\tinyvarzero \tc\tinyone \tc\tinyvarzero \tc\tinyone \tc\tinyvarzero \tc\tinyone \tc\tinyvarzero \tc\tinyone \tc\tinyvarzero \tc\tinyone \tc\tinyvarzero \tc\tinyone \tc\tinyvarzero \tc\tinyone \tc\tinyvarzero \tc\tinyone \\
\cellsend$\end{center}
where $I$ is emphasised with larger symbols.  The Turing machine $M$
evolves as follows:
\begin{center}\vspace*{-.5ex}\small$\label{emulation}\renewcommand{\arraystretch}{.95}\cellsleft
\tc\tinyone \tc\tinyone \tc\tinyvarone \tc\tinyzero \tc\tinyvarzero \tc\tinyzero \tc\tinyvarone \tc\tinyzero \tc\tinyone \tc\tinyone \tc\tinyone \tc\tinyone \tc\tinyvarone \tc\tinyzero \tc\tinyvarzero \tc\tinyzero \tc\tinyvarone \tc\tinyzero \tc\tinyone \tc\tinyone \tc\tinyone \tc\tinyone \tc\tinyvarone \tc\tinyzero \tc\tinyvarzero \tc\tinyzero \tc\tinyvarone \tc\tinyzero \tc\tinyone \tc\tinyone \tc\tinyone \tc\tinyone \tc\tinyvarone \tc\tinyzero \tc\tinyvarzero \tc\tinyzero \tc\tinyvarone \tc\tinyzero \tc\tinyone \tc\tinyone \tc\one \tc\zero \tc\zero \tc\one \tc\one \tc\tinyone \tc\tinyone \tc\tinyzero \nocarry{\tc\tinyone} \tc\tinyvarzero \tc\tinyone \tc\tinyvarzero \tc\tinyone \tc\tinyvarzero \tc\tinyone \tc\tinyvarzero \tc\tinyone \tc\tinyvarzero \tc\tinyone \tc\tinyvarzero \tc\tinyone \tc\tinyvarzero \tc\tinyone \tc\tinyvarzero \tc\tinyone \tc\tinyvarzero \tc\tinyone \tc\tinyvarzero \tc\tinyone \tc\tinyvarzero \tc\tinyone \tc\tinyvarzero \tc\tinyone \tc\tinyvarzero \tc\tinyone \tc\tinyvarzero \tc\tinyone \tc\tinyvarzero \tc\tinyone \\
\tc\tinyone \tc\tinyone \tc\tinyvarone \tc\tinyzero \tc\tinyvarzero \tc\tinyzero \tc\tinyvarone \tc\tinyzero \tc\tinyone \tc\tinyone \tc\tinyone \tc\tinyone \tc\tinyvarone \tc\tinyzero \tc\tinyvarzero \tc\tinyzero \tc\tinyvarone \tc\tinyzero \tc\tinyone \tc\tinyone \tc\tinyone \tc\tinyone \tc\tinyvarone \tc\tinyzero \tc\tinyvarzero \tc\tinyzero \tc\tinyvarone \tc\tinyzero \tc\tinyone \tc\tinyone \tc\tinyone \tc\tinyone \tc\tinyvarone \tc\tinyzero \tc\tinyvarzero \tc\tinyzero \tc\tinyone \tc\tinyone \tc\tinyone \tc\zero \tc\one \tc\zero \tc\one \tc\one \tc\zero \tc\zero \tc\tinyone \tc\tinyone \tc\tinyone \tc\tinyzero \nocarry{\tc\tinyone} \tc\tinyvarzero \tc\tinyone \tc\tinyvarzero \tc\tinyone \tc\tinyvarzero \tc\tinyone \tc\tinyvarzero \tc\tinyone \tc\tinyvarzero \tc\tinyone \tc\tinyvarzero \tc\tinyone \tc\tinyvarzero \tc\tinyone \tc\tinyvarzero \tc\tinyone \tc\tinyvarzero \tc\tinyone \tc\tinyvarzero \tc\tinyone \tc\tinyvarzero \tc\tinyone \tc\tinyvarzero \tc\tinyone \tc\tinyvarzero \tc\tinyone \tc\tinyvarzero \tc\tinyone \\
\tc\tinyone \tc\tinyone \tc\tinyvarone \tc\tinyzero \tc\tinyvarzero \tc\tinyzero \tc\tinyvarone \tc\tinyzero \tc\tinyone \tc\tinyone \tc\tinyone \tc\tinyone \tc\tinyvarone \tc\tinyzero \tc\tinyvarzero \tc\tinyzero \tc\tinyvarone \tc\tinyzero \tc\tinyone \tc\tinyone \tc\tinyone \tc\tinyone \tc\tinyvarone \tc\tinyzero \tc\tinyvarzero \tc\tinyzero \tc\tinyvarone \tc\tinyzero \tc\tinyone \tc\tinyone \tc\tinyone \tc\tinyone \tc\tinyvarone \tc\tinyzero \tc\tinyzero \tc\tinyone \tc\tinyone \tc\tinyzero \tc\one \tc\one \tc\one \tc\one \tc\one \tc\one \tc\zero \tc\one \tc\one \tc\tinyzero \tc\tinyone \tc\tinyone \tc\tinyone \tc\tinyzero \nocarry{\tc\tinyone} \tc\tinyvarzero \tc\tinyone \tc\tinyvarzero \tc\tinyone \tc\tinyvarzero \tc\tinyone \tc\tinyvarzero \tc\tinyone \tc\tinyvarzero \tc\tinyone \tc\tinyvarzero \tc\tinyone \tc\tinyvarzero \tc\tinyone \tc\tinyvarzero \tc\tinyone \tc\tinyvarzero \tc\tinyone \tc\tinyvarzero \tc\tinyone \tc\tinyvarzero \tc\tinyone \tc\tinyvarzero \tc\tinyone \tc\tinyvarzero \tc\tinyone \\
\tc\tinyone \tc\tinyone \tc\tinyvarone \tc\tinyzero \tc\tinyvarzero \tc\tinyzero \tc\tinyvarone \tc\tinyzero \tc\tinyone \tc\tinyone \tc\tinyone \tc\tinyone \tc\tinyvarone \tc\tinyzero \tc\tinyvarzero \tc\tinyzero \tc\tinyvarone \tc\tinyzero \tc\tinyone \tc\tinyone \tc\tinyone \tc\tinyone \tc\tinyvarone \tc\tinyzero \tc\tinyvarzero \tc\tinyzero \tc\tinyvarone \tc\tinyzero \tc\tinyone \tc\tinyone \tc\tinyone \tc\tinyone \tc\tinyone \tc\tinyzero \tc\tinyone \tc\tinyone \tc\tinyone \tc\one \tc\one \tc\zero \tc\zero \tc\zero \tc\zero \tc\one \tc\one \tc\one \tc\one \tc\one \tc\tinyone \tc\tinyzero \tc\tinyone \tc\tinyone \tc\tinyone \tc\tinyzero \nocarry{\tc\tinyone} \tc\tinyvarzero \tc\tinyone \tc\tinyvarzero \tc\tinyone \tc\tinyvarzero \tc\tinyone \tc\tinyvarzero \tc\tinyone \tc\tinyvarzero \tc\tinyone \tc\tinyvarzero \tc\tinyone \tc\tinyvarzero \tc\tinyone \tc\tinyvarzero \tc\tinyone \tc\tinyvarzero \tc\tinyone \tc\tinyvarzero \tc\tinyone \tc\tinyvarzero \tc\tinyone \tc\tinyvarzero \tc\tinyone \\
\tc\tinyone \tc\tinyone \tc\tinyvarone \tc\tinyzero \tc\tinyvarzero \tc\tinyzero \tc\tinyvarone \tc\tinyzero \tc\tinyone \tc\tinyone \tc\tinyone \tc\tinyone \tc\tinyvarone \tc\tinyzero \tc\tinyvarzero \tc\tinyzero \tc\tinyvarone \tc\tinyzero \tc\tinyone \tc\tinyone \tc\tinyone \tc\tinyone \tc\tinyvarone \tc\tinyzero \tc\tinyvarzero \tc\tinyzero \tc\tinyone \tc\tinyone \tc\tinyone \tc\tinyzero \tc\tinyzero \tc\tinyzero \tc\tinyone \tc\tinyone \tc\tinyone \tc\tinyzero \tc\zero \tc\zero \tc\one \tc\zero \tc\zero \tc\zero \tc\one \tc\one \tc\zero \tc\zero \tc\zero \tc\zero \tc\one \tc\tinyone \tc\tinyone \tc\tinyzero \tc\tinyone \tc\tinyone \tc\tinyone \tc\tinyzero \nocarry{\tc\tinyone} \tc\tinyvarzero \tc\tinyone \tc\tinyvarzero \tc\tinyone \tc\tinyvarzero \tc\tinyone \tc\tinyvarzero \tc\tinyone \tc\tinyvarzero \tc\tinyone \tc\tinyvarzero \tc\tinyone \tc\tinyvarzero \tc\tinyone \tc\tinyvarzero \tc\tinyone \tc\tinyvarzero \tc\tinyone \tc\tinyvarzero \tc\tinyone \tc\tinyvarzero \tc\tinyone \\
\tc\tinyone \tc\tinyone \tc\tinyvarone \tc\tinyzero \tc\tinyvarzero \tc\tinyzero \tc\tinyvarone \tc\tinyzero \tc\tinyone \tc\tinyone \tc\tinyone \tc\tinyone \tc\tinyvarone \tc\tinyzero \tc\tinyvarzero \tc\tinyzero \tc\tinyvarone \tc\tinyzero \tc\tinyone \tc\tinyone \tc\tinyone \tc\tinyone \tc\tinyvarone \tc\tinyzero \tc\tinyzero \tc\tinyone \tc\tinyone \tc\tinyzero \tc\tinyone \tc\tinyzero \tc\tinyzero \tc\tinyone \tc\tinyone \tc\tinyzero \tc\tinyone \tc\zero \tc\zero \tc\one \tc\one \tc\zero \tc\zero \tc\one \tc\one \tc\one \tc\zero \tc\zero \tc\zero \tc\one \tc\one \tc\zero \tc\tinyone \tc\tinyone \tc\tinyone \tc\tinyzero \tc\tinyone \tc\tinyone \tc\tinyone \tc\tinyzero \nocarry{\tc\tinyone} \tc\tinyvarzero \tc\tinyone \tc\tinyvarzero \tc\tinyone \tc\tinyvarzero \tc\tinyone \tc\tinyvarzero \tc\tinyone \tc\tinyvarzero \tc\tinyone \tc\tinyvarzero \tc\tinyone \tc\tinyvarzero \tc\tinyone \tc\tinyvarzero \tc\tinyone \tc\tinyvarzero \tc\tinyone \tc\tinyvarzero \tc\tinyone \\
\tc\tinyone \tc\tinyone \tc\tinyvarone \tc\tinyzero \tc\tinyvarzero \tc\tinyzero \tc\tinyvarone \tc\tinyzero \tc\tinyone \tc\tinyone \tc\tinyone \tc\tinyone \tc\tinyvarone \tc\tinyzero \tc\tinyvarzero \tc\tinyzero \tc\tinyvarone \tc\tinyzero \tc\tinyone \tc\tinyone \tc\tinyone \tc\tinyone \tc\tinyone \tc\tinyzero \tc\tinyone \tc\tinyone \tc\tinyone \tc\tinyone \tc\tinyone \tc\tinyzero \tc\tinyone \tc\tinyone \tc\tinyone \tc\tinyone \tc\one \tc\zero \tc\one \tc\one \tc\one \tc\zero \tc\one \tc\one \tc\zero \tc\one \tc\zero \tc\zero \tc\one \tc\one \tc\one \tc\one \tc\one \tc\tinyzero \tc\tinyone \tc\tinyone \tc\tinyone \tc\tinyzero \tc\tinyone \tc\tinyone \tc\tinyone \tc\tinyzero \nocarry{\tc\tinyone} \tc\tinyvarzero \tc\tinyone \tc\tinyvarzero \tc\tinyone \tc\tinyvarzero \tc\tinyone \tc\tinyvarzero \tc\tinyone \tc\tinyvarzero \tc\tinyone \tc\tinyvarzero \tc\tinyone \tc\tinyvarzero \tc\tinyone \tc\tinyvarzero \tc\tinyone \tc\tinyvarzero \tc\tinyone \\
\tc\tinyone \tc\tinyone \tc\tinyvarone \tc\tinyzero \tc\tinyvarzero \tc\tinyzero \tc\tinyvarone \tc\tinyzero \tc\tinyone \tc\tinyone \tc\tinyone \tc\tinyone \tc\tinyvarone \tc\tinyzero \tc\tinyvarzero \tc\tinyzero \tc\tinyone \tc\tinyone \tc\tinyone \tc\tinyzero \tc\tinyzero \tc\tinyzero \tc\tinyone \tc\tinyone \tc\tinyone \tc\tinyzero \tc\tinyzero \tc\tinyzero \tc\tinyone \tc\tinyone \tc\tinyone \tc\tinyzero \tc\tinyzero \tc\zero \tc\one \tc\one \tc\one \tc\zero \tc\one \tc\one \tc\one \tc\one \tc\one \tc\one \tc\zero \tc\one \tc\one \tc\zero \tc\zero \tc\zero \tc\one \tc\one \tc\tinyone \tc\tinyzero \tc\tinyone \tc\tinyone \tc\tinyone \tc\tinyzero \tc\tinyone \tc\tinyone \tc\tinyone \tc\tinyzero \nocarry{\tc\tinyone} \tc\tinyvarzero \tc\tinyone \tc\tinyvarzero \tc\tinyone \tc\tinyvarzero \tc\tinyone \tc\tinyvarzero \tc\tinyone \tc\tinyvarzero \tc\tinyone \tc\tinyvarzero \tc\tinyone \tc\tinyvarzero \tc\tinyone \tc\tinyvarzero \tc\tinyone \\
\tc\tinyone \tc\tinyone \tc\tinyvarone \tc\tinyzero \tc\tinyvarzero \tc\tinyzero \tc\tinyvarone \tc\tinyzero \tc\tinyone \tc\tinyone \tc\tinyone \tc\tinyone \tc\tinyvarone \tc\tinyzero \tc\tinyzero \tc\tinyone \tc\tinyone \tc\tinyzero \tc\tinyone \tc\tinyzero \tc\tinyzero \tc\tinyone \tc\tinyone \tc\tinyzero \tc\tinyone \tc\tinyzero \tc\tinyzero \tc\tinyone \tc\tinyone \tc\tinyzero \tc\tinyone \tc\tinyzero \tc\zero \tc\one \tc\one \tc\zero \tc\one \tc\one \tc\one \tc\zero \tc\zero \tc\zero \tc\zero \tc\one \tc\one \tc\one \tc\one \tc\zero \tc\zero \tc\one \tc\one \tc\zero \tc\one \tc\tinyone \tc\tinyone \tc\tinyzero \tc\tinyone \tc\tinyone \tc\tinyone \tc\tinyzero \tc\tinyone \tc\tinyone \tc\tinyone \tc\tinyzero \nocarry{\tc\tinyone} \tc\tinyvarzero \tc\tinyone \tc\tinyvarzero \tc\tinyone \tc\tinyvarzero \tc\tinyone \tc\tinyvarzero \tc\tinyone \tc\tinyvarzero \tc\tinyone \tc\tinyvarzero \tc\tinyone \tc\tinyvarzero \tc\tinyone \\
\tc\tinyone \tc\tinyone \tc\tinyvarone \tc\tinyzero \tc\tinyvarzero \tc\tinyzero \tc\tinyvarone \tc\tinyzero \tc\tinyone \tc\tinyone \tc\tinyone \tc\tinyone \tc\tinyone \tc\tinyzero \tc\tinyone \tc\tinyone \tc\tinyone \tc\tinyone \tc\tinyone \tc\tinyzero \tc\tinyone \tc\tinyone \tc\tinyone \tc\tinyone \tc\tinyone \tc\tinyzero \tc\tinyone \tc\tinyone \tc\tinyone \tc\tinyone \tc\tinyone \tc\zero \tc\one \tc\one \tc\one \tc\one \tc\one \tc\zero \tc\one \tc\zero \tc\zero \tc\zero \tc\one \tc\one \tc\zero \tc\zero \tc\one \tc\zero \tc\one \tc\one \tc\one \tc\one \tc\one \tc\zero \tc\tinyone \tc\tinyone \tc\tinyone \tc\tinyzero \tc\tinyone \tc\tinyone \tc\tinyone \tc\tinyzero \tc\tinyone \tc\tinyone \tc\tinyone \tc\tinyzero \nocarry{\tc\tinyone} \tc\tinyvarzero \tc\tinyone \tc\tinyvarzero \tc\tinyone \tc\tinyvarzero \tc\tinyone \tc\tinyvarzero \tc\tinyone \tc\tinyvarzero \tc\tinyone \tc\tinyvarzero \tc\tinyone \\
\tc\tinyone \tc\tinyone \tc\tinyvarone \tc\tinyzero \tc\tinyvarzero \tc\tinyzero \tc\tinyone \tc\tinyone \tc\tinyone \tc\tinyzero \tc\tinyzero \tc\tinyzero \tc\tinyone \tc\tinyone \tc\tinyone \tc\tinyzero \tc\tinyzero \tc\tinyzero \tc\tinyone \tc\tinyone \tc\tinyone \tc\tinyzero \tc\tinyzero \tc\tinyzero \tc\tinyone \tc\tinyone \tc\tinyone \tc\tinyzero \tc\tinyzero \tc\tinyzero \tc\one \tc\one \tc\one \tc\zero \tc\zero \tc\zero \tc\one \tc\one \tc\one \tc\zero \tc\zero \tc\one \tc\one \tc\one \tc\zero \tc\one \tc\one \tc\one \tc\one \tc\zero \tc\zero \tc\zero \tc\one \tc\one \tc\one \tc\tinyzero \tc\tinyone \tc\tinyone \tc\tinyone \tc\tinyzero \tc\tinyone \tc\tinyone \tc\tinyone \tc\tinyzero \tc\tinyone \tc\tinyone \tc\tinyone \tc\tinyzero \nocarry{\tc\tinyone} \tc\tinyvarzero \tc\tinyone \tc\tinyvarzero \tc\tinyone \tc\tinyvarzero \tc\tinyone \tc\tinyvarzero \tc\tinyone \tc\tinyvarzero \tc\tinyone \\
\tc\tinyone \tc\tinyone \tc\tinyvarone \tc\tinyzero \tc\tinyzero \tc\tinyone \tc\tinyone \tc\tinyzero \tc\tinyone \tc\tinyzero \tc\tinyzero \tc\tinyone \tc\tinyone \tc\tinyzero \tc\tinyone \tc\tinyzero \tc\tinyzero \tc\tinyone \tc\tinyone \tc\tinyzero \tc\tinyone \tc\tinyzero \tc\tinyzero \tc\tinyone \tc\tinyone \tc\tinyzero \tc\tinyone \tc\tinyzero \tc\tinyzero \tc\one \tc\one \tc\zero \tc\one \tc\zero \tc\zero \tc\one \tc\one \tc\zero \tc\one \tc\zero \tc\one \tc\one \tc\zero \tc\one \tc\one \tc\one \tc\zero \tc\zero \tc\one \tc\zero \tc\zero \tc\one \tc\one \tc\zero \tc\one \tc\one \tc\tinyone \tc\tinyzero \tc\tinyone \tc\tinyone \tc\tinyone \tc\tinyzero \tc\tinyone \tc\tinyone \tc\tinyone \tc\tinyzero \tc\tinyone \tc\tinyone \tc\tinyone \tc\tinyzero \nocarry{\tc\tinyone} \tc\tinyvarzero \tc\tinyone \tc\tinyvarzero \tc\tinyone \tc\tinyvarzero \tc\tinyone \tc\tinyvarzero \tc\tinyone \\
\tc\tinyone \tc\tinyone \tc\tinyone \tc\tinyzero \tc\tinyone \tc\tinyone \tc\tinyone \tc\tinyone \tc\tinyone \tc\tinyzero \tc\tinyone \tc\tinyone \tc\tinyone \tc\tinyone \tc\tinyone \tc\tinyzero \tc\tinyone \tc\tinyone \tc\tinyone \tc\tinyone \tc\tinyone \tc\tinyzero \tc\tinyone \tc\tinyone \tc\tinyone \tc\tinyone \tc\tinyone \tc\tinyzero \tc\one \tc\one \tc\one \tc\one \tc\one \tc\zero \tc\one \tc\one \tc\one \tc\one \tc\one \tc\one \tc\one \tc\one \tc\one \tc\one \tc\zero \tc\one \tc\zero \tc\one \tc\one \tc\zero \tc\one \tc\one \tc\one \tc\one \tc\one \tc\zero \tc\one \tc\tinyone \tc\tinyone \tc\tinyzero \tc\tinyone \tc\tinyone \tc\tinyone \tc\tinyzero \tc\tinyone \tc\tinyone \tc\tinyone \tc\tinyzero \tc\tinyone \tc\tinyone \tc\tinyone \tc\tinyzero \nocarry{\tc\tinyone} \tc\tinyvarzero \tc\tinyone \tc\tinyvarzero \tc\tinyone \tc\tinyvarzero \tc\tinyone \\
\cellsend$\end{center}
The head makes progressively larger left and
right sweeps, analogous to
Wolfram and Cook's limited emulation on 
$\repleft\varzero\; 0\; \tapedata\; \repright 0\,$.
The tape is
shown at the end of every rightward sweep.
Larger symbols emphasise the causal future of $I$. 
Note that, on the causal future of $I$, the Turing machine has indeed
emulated the cellular automaton.
Below we have interleaved the ends of the leftward sweeps:
\begin{center}\label{leftright}\small$\renewcommand{\arraystretch}{.9}\cellsleft\label{interleaved}
\tc\tinyone \tc\tinyone \tc\tinyvarone \tc\tinyzero \tc\tinyvarzero \tc\tinyzero \tc\tinyvarone \tc\tinyzero \tc\tinyone \tc\tinyone \tc\tinyone \tc\tinyone \tc\tinyvarone \tc\tinyzero \tc\tinyvarzero \tc\tinyzero \tc\tinyvarone \tc\tinyzero \tc\tinyone \tc\tinyone \tc\tinyone \tc\tinyone \tc\tinyvarone \tc\tinyzero \tc\tinyvarzero \tc\tinyzero \tc\tinyvarone \tc\tinyzero \tc\tinyone \tc\tinyone \tc\tinyone \tc\tinyone \tc\tinyvarone \tc\tinyzero \tc\tinyvarzero \tc\tinyzero \tc\tinyvarone \tc\tinyzero \tc\tinyone \tc\tinyone \tc\one \tc\zero \tc\zero \tc\one \tc\one \tc\tinyone \tc\tinyone \tc\tinyzero \nocarry{\tc\tinyone} \tc\tinyvarzero \tc\tinyone \tc\tinyvarzero \tc\tinyone \tc\tinyvarzero \tc\tinyone \tc\tinyvarzero \tc\tinyone \tc\tinyvarzero \tc\tinyone \tc\tinyvarzero \tc\tinyone \tc\tinyvarzero \tc\tinyone \tc\tinyvarzero \tc\tinyone \tc\tinyvarzero \tc\tinyone \tc\tinyvarzero \tc\tinyone \tc\tinyvarzero \tc\tinyone \tc\tinyvarzero \tc\tinyone \tc\tinyvarzero \tc\tinyone \tc\tinyvarzero \tc\tinyone \tc\tinyvarzero \tc\tinyone \\
\tc\tinyone \tc\tinyone \tc\tinyvarone \tc\tinyzero \tc\tinyvarzero \tc\tinyzero \tc\tinyvarone \tc\tinyzero \tc\tinyone \tc\tinyone \tc\tinyone \tc\tinyone \tc\tinyvarone \tc\tinyzero \tc\tinyvarzero \tc\tinyzero \tc\tinyvarone \tc\tinyzero \tc\tinyone \tc\tinyone \tc\tinyone \tc\tinyone \tc\tinyvarone \tc\tinyzero \tc\tinyvarzero \tc\tinyzero \tc\tinyvarone \tc\tinyzero \tc\tinyone \tc\tinyone \tc\tinyone \tc\tinyone \tc\tinyvarone \tc\tinyzero \tc\tinyvarzero \tc\tinyzero \nocarry{\tc\tinyvarone} \tc\tinyvarone \tc\tinydunno \tc\dunno \tc\varone \tc\varzero \tc\varone \tc\dunno \tc\dunno \tc\dunno \tc\tinyvarone \tc\tinyvarone \tc\tinyvarone \tc\tinyvarzero \tc\tinyone \tc\tinyvarzero \tc\tinyone \tc\tinyvarzero \tc\tinyone \tc\tinyvarzero \tc\tinyone \tc\tinyvarzero \tc\tinyone \tc\tinyvarzero \tc\tinyone \tc\tinyvarzero \tc\tinyone \tc\tinyvarzero \tc\tinyone \tc\tinyvarzero \tc\tinyone \tc\tinyvarzero \tc\tinyone \tc\tinyvarzero \tc\tinyone \tc\tinyvarzero \tc\tinyone \tc\tinyvarzero \tc\tinyone \tc\tinyvarzero \tc\tinyone \tc\tinyvarzero \tc\tinyone \\
\tc\tinyone \tc\tinyone \tc\tinyvarone \tc\tinyzero \tc\tinyvarzero \tc\tinyzero \tc\tinyvarone \tc\tinyzero \tc\tinyone \tc\tinyone \tc\tinyone \tc\tinyone \tc\tinyvarone \tc\tinyzero \tc\tinyvarzero \tc\tinyzero \tc\tinyvarone \tc\tinyzero \tc\tinyone \tc\tinyone \tc\tinyone \tc\tinyone \tc\tinyvarone \tc\tinyzero \tc\tinyvarzero \tc\tinyzero \tc\tinyvarone \tc\tinyzero \tc\tinyone \tc\tinyone \tc\tinyone \tc\tinyone \tc\tinyvarone \tc\tinyzero \tc\tinyvarzero \tc\tinyzero \tc\tinyone \tc\tinyone \tc\tinyone \tc\zero \tc\one \tc\zero \tc\one \tc\one \tc\zero \tc\zero \tc\tinyone \tc\tinyone \tc\tinyone \tc\tinyzero \nocarry{\tc\tinyone} \tc\tinyvarzero \tc\tinyone \tc\tinyvarzero \tc\tinyone \tc\tinyvarzero \tc\tinyone \tc\tinyvarzero \tc\tinyone \tc\tinyvarzero \tc\tinyone \tc\tinyvarzero \tc\tinyone \tc\tinyvarzero \tc\tinyone \tc\tinyvarzero \tc\tinyone \tc\tinyvarzero \tc\tinyone \tc\tinyvarzero \tc\tinyone \tc\tinyvarzero \tc\tinyone \tc\tinyvarzero \tc\tinyone \tc\tinyvarzero \tc\tinyone \tc\tinyvarzero \tc\tinyone \\
\tc\tinyone \tc\tinyone \tc\tinyvarone \tc\tinyzero \tc\tinyvarzero \tc\tinyzero \tc\tinyvarone \tc\tinyzero \tc\tinyone \tc\tinyone \tc\tinyone \tc\tinyone \tc\tinyvarone \tc\tinyzero \tc\tinyvarzero \tc\tinyzero \tc\tinyvarone \tc\tinyzero \tc\tinyone \tc\tinyone \tc\tinyone \tc\tinyone \tc\tinyvarone \tc\tinyzero \tc\tinyvarzero \tc\tinyzero \tc\tinyvarone \tc\tinyzero \tc\tinyone \tc\tinyone \tc\tinyone \tc\tinyone \tc\tinyvarone \tc\tinyzero \nocarry{\tc\tinyvarzero} \tc\tinyvarone \tc\tinydunno \tc\tinydunno \tc\varone \tc\varone \tc\varone \tc\varone \tc\dunno \tc\varone \tc\varzero \tc\varone \tc\dunno \tc\tinydunno \tc\tinyvarone \tc\tinyvarone \tc\tinyvarone \tc\tinyvarzero \tc\tinyone \tc\tinyvarzero \tc\tinyone \tc\tinyvarzero \tc\tinyone \tc\tinyvarzero \tc\tinyone \tc\tinyvarzero \tc\tinyone \tc\tinyvarzero \tc\tinyone \tc\tinyvarzero \tc\tinyone \tc\tinyvarzero \tc\tinyone \tc\tinyvarzero \tc\tinyone \tc\tinyvarzero \tc\tinyone \tc\tinyvarzero \tc\tinyone \tc\tinyvarzero \tc\tinyone \tc\tinyvarzero \tc\tinyone \tc\tinyvarzero \tc\tinyone \\
\tc\tinyone \tc\tinyone \tc\tinyvarone \tc\tinyzero \tc\tinyvarzero \tc\tinyzero \tc\tinyvarone \tc\tinyzero \tc\tinyone \tc\tinyone \tc\tinyone \tc\tinyone \tc\tinyvarone \tc\tinyzero \tc\tinyvarzero \tc\tinyzero \tc\tinyvarone \tc\tinyzero \tc\tinyone \tc\tinyone \tc\tinyone \tc\tinyone \tc\tinyvarone \tc\tinyzero \tc\tinyvarzero \tc\tinyzero \tc\tinyvarone \tc\tinyzero \tc\tinyone \tc\tinyone \tc\tinyone \tc\tinyone \tc\tinyvarone \tc\tinyzero \tc\tinyzero \tc\tinyone \tc\tinyone \tc\tinyzero \tc\one \tc\one \tc\one \tc\one \tc\one \tc\one \tc\zero \tc\one \tc\one \tc\tinyzero \tc\tinyone \tc\tinyone \tc\tinyone \tc\tinyzero \nocarry{\tc\tinyone} \tc\tinyvarzero \tc\tinyone \tc\tinyvarzero \tc\tinyone \tc\tinyvarzero \tc\tinyone \tc\tinyvarzero \tc\tinyone \tc\tinyvarzero \tc\tinyone \tc\tinyvarzero \tc\tinyone \tc\tinyvarzero \tc\tinyone \tc\tinyvarzero \tc\tinyone \tc\tinyvarzero \tc\tinyone \tc\tinyvarzero \tc\tinyone \tc\tinyvarzero \tc\tinyone \tc\tinyvarzero \tc\tinyone \tc\tinyvarzero \tc\tinyone \\
\tc\tinyone \tc\tinyone \tc\tinyvarone \tc\tinyzero \tc\tinyvarzero \tc\tinyzero \tc\tinyvarone \tc\tinyzero \tc\tinyone \tc\tinyone \tc\tinyone \tc\tinyone \tc\tinyvarone \tc\tinyzero \tc\tinyvarzero \tc\tinyzero \tc\tinyvarone \tc\tinyzero \tc\tinyone \tc\tinyone \tc\tinyone \tc\tinyone \tc\tinyvarone \tc\tinyzero \tc\tinyvarzero \tc\tinyzero \tc\tinyvarone \tc\tinyzero \tc\tinyone \tc\tinyone \tc\tinyone \tc\tinyone \nocarry{\tc\tinyvarone} \tc\tinyvarzero \tc\tinyvarone \tc\tinydunno \tc\tinyvarone \tc\varone \tc\dunno \tc\dunno \tc\dunno \tc\dunno \tc\dunno \tc\varone \tc\varone \tc\dunno \tc\varone \tc\varone \tc\tinydunno \tc\tinydunno \tc\tinyvarone \tc\tinyvarone \tc\tinyvarone \tc\tinyvarzero \tc\tinyone \tc\tinyvarzero \tc\tinyone \tc\tinyvarzero \tc\tinyone \tc\tinyvarzero \tc\tinyone \tc\tinyvarzero \tc\tinyone \tc\tinyvarzero \tc\tinyone \tc\tinyvarzero \tc\tinyone \tc\tinyvarzero \tc\tinyone \tc\tinyvarzero \tc\tinyone \tc\tinyvarzero \tc\tinyone \tc\tinyvarzero \tc\tinyone \tc\tinyvarzero \tc\tinyone \tc\tinyvarzero \tc\tinyone \\
\tc\tinyone \tc\tinyone \tc\tinyvarone \tc\tinyzero \tc\tinyvarzero \tc\tinyzero \tc\tinyvarone \tc\tinyzero \tc\tinyone \tc\tinyone \tc\tinyone \tc\tinyone \tc\tinyvarone \tc\tinyzero \tc\tinyvarzero \tc\tinyzero \tc\tinyvarone \tc\tinyzero \tc\tinyone \tc\tinyone \tc\tinyone \tc\tinyone \tc\tinyvarone \tc\tinyzero \tc\tinyvarzero \tc\tinyzero \tc\tinyvarone \tc\tinyzero \tc\tinyone \tc\tinyone \tc\tinyone \tc\tinyone \tc\tinyone \tc\tinyzero \tc\tinyone \tc\tinyone \tc\tinyone \tc\one \tc\one \tc\zero \tc\zero \tc\zero \tc\zero \tc\one \tc\one \tc\one \tc\one \tc\one \tc\tinyone \tc\tinyzero \tc\tinyone \tc\tinyone \tc\tinyone \tc\tinyzero \nocarry{\tc\tinyone} \tc\tinyvarzero \tc\tinyone \tc\tinyvarzero \tc\tinyone \tc\tinyvarzero \tc\tinyone \tc\tinyvarzero \tc\tinyone \tc\tinyvarzero \tc\tinyone \tc\tinyvarzero \tc\tinyone \tc\tinyvarzero \tc\tinyone \tc\tinyvarzero \tc\tinyone \tc\tinyvarzero \tc\tinyone \tc\tinyvarzero \tc\tinyone \tc\tinyvarzero \tc\tinyone \tc\tinyvarzero \tc\tinyone \\
\tc\tinyone \tc\tinyone \tc\tinyvarone \tc\tinyzero \tc\tinyvarzero \tc\tinyzero \tc\tinyvarone \tc\tinyzero \tc\tinyone \tc\tinyone \tc\tinyone \tc\tinyone \tc\tinyvarone \tc\tinyzero \tc\tinyvarzero \tc\tinyzero \tc\tinyvarone \tc\tinyzero \tc\tinyone \tc\tinyone \tc\tinyone \tc\tinyone \tc\tinyvarone \tc\tinyzero \tc\tinyvarzero \tc\tinyzero \nocarry{\tc\tinyvarone} \tc\tinyvarone \tc\tinydunno \tc\tinydunno \tc\tinydunno \tc\tinydunno \tc\tinyvarone \tc\tinyvarone \tc\tinydunno \tc\tinydunno \tc\dunno \tc\dunno \tc\varone \tc\varzero \tc\varzero \tc\varzero \tc\varone \tc\dunno \tc\dunno \tc\dunno \tc\dunno \tc\dunno \tc\varone \tc\tinyvarone \tc\tinydunno \tc\tinydunno \tc\tinyvarone \tc\tinyvarone \tc\tinyvarone \tc\tinyvarzero \tc\tinyone \tc\tinyvarzero \tc\tinyone \tc\tinyvarzero \tc\tinyone \tc\tinyvarzero \tc\tinyone \tc\tinyvarzero \tc\tinyone \tc\tinyvarzero \tc\tinyone \tc\tinyvarzero \tc\tinyone \tc\tinyvarzero \tc\tinyone \tc\tinyvarzero \tc\tinyone \tc\tinyvarzero \tc\tinyone \tc\tinyvarzero \tc\tinyone \tc\tinyvarzero \tc\tinyone \\
\tc\tinyone \tc\tinyone \tc\tinyvarone \tc\tinyzero \tc\tinyvarzero \tc\tinyzero \tc\tinyvarone \tc\tinyzero \tc\tinyone \tc\tinyone \tc\tinyone \tc\tinyone \tc\tinyvarone \tc\tinyzero \tc\tinyvarzero \tc\tinyzero \tc\tinyvarone \tc\tinyzero \tc\tinyone \tc\tinyone \tc\tinyone \tc\tinyone \tc\tinyvarone \tc\tinyzero \tc\tinyvarzero \tc\tinyzero \tc\tinyone \tc\tinyone \tc\tinyone \tc\tinyzero \tc\tinyzero \tc\tinyzero \tc\tinyone \tc\tinyone \tc\tinyone \tc\tinyzero \tc\zero \tc\zero \tc\one \tc\zero \tc\zero \tc\zero \tc\one \tc\one \tc\zero \tc\zero \tc\zero \tc\zero \tc\one \tc\tinyone \tc\tinyone \tc\tinyzero \tc\tinyone \tc\tinyone \tc\tinyone \tc\tinyzero \nocarry{\tc\tinyone} \tc\tinyvarzero \tc\tinyone \tc\tinyvarzero \tc\tinyone \tc\tinyvarzero \tc\tinyone \tc\tinyvarzero \tc\tinyone \tc\tinyvarzero \tc\tinyone \tc\tinyvarzero \tc\tinyone \tc\tinyvarzero \tc\tinyone \tc\tinyvarzero \tc\tinyone \tc\tinyvarzero \tc\tinyone \tc\tinyvarzero \tc\tinyone \tc\tinyvarzero \tc\tinyone \\
\tc\tinyone \tc\tinyone \tc\tinyvarone \tc\tinyzero \tc\tinyvarzero \tc\tinyzero \tc\tinyvarone \tc\tinyzero \tc\tinyone \tc\tinyone \tc\tinyone \tc\tinyone \tc\tinyvarone \tc\tinyzero \tc\tinyvarzero \tc\tinyzero \tc\tinyvarone \tc\tinyzero \tc\tinyone \tc\tinyone \tc\tinyone \tc\tinyone \tc\tinyvarone \tc\tinyzero \nocarry{\tc\tinyvarzero} \tc\tinyvarone \tc\tinydunno \tc\tinydunno \tc\tinyvarone \tc\tinyvarzero \tc\tinyvarzero \tc\tinyvarone \tc\tinydunno \tc\tinydunno \tc\tinyvarone \tc\varzero \tc\varzero \tc\varone \tc\varone \tc\varzero \tc\varzero \tc\varone \tc\dunno \tc\varone \tc\varzero \tc\varzero \tc\varzero \tc\varone \tc\dunno \tc\dunno \tc\tinyvarone \tc\tinyvarone \tc\tinydunno \tc\tinydunno \tc\tinyvarone \tc\tinyvarone \tc\tinyvarone \tc\tinyvarzero \tc\tinyone \tc\tinyvarzero \tc\tinyone \tc\tinyvarzero \tc\tinyone \tc\tinyvarzero \tc\tinyone \tc\tinyvarzero \tc\tinyone \tc\tinyvarzero \tc\tinyone \tc\tinyvarzero \tc\tinyone \tc\tinyvarzero \tc\tinyone \tc\tinyvarzero \tc\tinyone \tc\tinyvarzero \tc\tinyone \tc\tinyvarzero \tc\tinyone \\
\tc\tinyone \tc\tinyone \tc\tinyvarone \tc\tinyzero \tc\tinyvarzero \tc\tinyzero \tc\tinyvarone \tc\tinyzero \tc\tinyone \tc\tinyone \tc\tinyone \tc\tinyone \tc\tinyvarone \tc\tinyzero \tc\tinyvarzero \tc\tinyzero \tc\tinyvarone \tc\tinyzero \tc\tinyone \tc\tinyone \tc\tinyone \tc\tinyone \tc\tinyvarone \tc\tinyzero \tc\tinyzero \tc\tinyone \tc\tinyone \tc\tinyzero \tc\tinyone \tc\tinyzero \tc\tinyzero \tc\tinyone \tc\tinyone \tc\tinyzero \tc\tinyone \tc\zero \tc\zero \tc\one \tc\one \tc\zero \tc\zero \tc\one \tc\one \tc\one \tc\zero \tc\zero \tc\zero \tc\one \tc\one \tc\zero \tc\tinyone \tc\tinyone \tc\tinyone \tc\tinyzero \tc\tinyone \tc\tinyone \tc\tinyone \tc\tinyzero \nocarry{\tc\tinyone} \tc\tinyvarzero \tc\tinyone \tc\tinyvarzero \tc\tinyone \tc\tinyvarzero \tc\tinyone \tc\tinyvarzero \tc\tinyone \tc\tinyvarzero \tc\tinyone \tc\tinyvarzero \tc\tinyone \tc\tinyvarzero \tc\tinyone \tc\tinyvarzero \tc\tinyone \tc\tinyvarzero \tc\tinyone \tc\tinyvarzero \tc\tinyone \\
\tc\tinyone \tc\tinyone \tc\tinyvarone \tc\tinyzero \tc\tinyvarzero \tc\tinyzero \tc\tinyvarone \tc\tinyzero \tc\tinyone \tc\tinyone \tc\tinyone \tc\tinyone \tc\tinyvarone \tc\tinyzero \tc\tinyvarzero \tc\tinyzero \tc\tinyvarone \tc\tinyzero \tc\tinyone \tc\tinyone \tc\tinyone \tc\tinyone \nocarry{\tc\tinyvarone} \tc\tinyvarzero \tc\tinyvarone \tc\tinydunno \tc\tinyvarone \tc\tinyvarone \tc\tinyvarone \tc\tinyvarzero \tc\tinyvarone \tc\tinydunno \tc\tinyvarone \tc\tinyvarone \tc\varone \tc\varzero \tc\varone \tc\dunno \tc\varone \tc\varzero \tc\varone \tc\dunno \tc\dunno \tc\varone \tc\varzero \tc\varzero \tc\varone \tc\dunno \tc\varone \tc\varone \tc\dunno \tc\tinydunno \tc\tinyvarone \tc\tinyvarone \tc\tinydunno \tc\tinydunno \tc\tinyvarone \tc\tinyvarone \tc\tinyvarone \tc\tinyvarzero \tc\tinyone \tc\tinyvarzero \tc\tinyone \tc\tinyvarzero \tc\tinyone \tc\tinyvarzero \tc\tinyone \tc\tinyvarzero \tc\tinyone \tc\tinyvarzero \tc\tinyone \tc\tinyvarzero \tc\tinyone \tc\tinyvarzero \tc\tinyone \tc\tinyvarzero \tc\tinyone \tc\tinyvarzero \tc\tinyone \\
\tc\tinyone \tc\tinyone \tc\tinyvarone \tc\tinyzero \tc\tinyvarzero \tc\tinyzero \tc\tinyvarone \tc\tinyzero \tc\tinyone \tc\tinyone \tc\tinyone \tc\tinyone \tc\tinyvarone \tc\tinyzero \tc\tinyvarzero \tc\tinyzero \tc\tinyvarone \tc\tinyzero \tc\tinyone \tc\tinyone \tc\tinyone \tc\tinyone \tc\tinyone \tc\tinyzero \tc\tinyone \tc\tinyone \tc\tinyone \tc\tinyone \tc\tinyone \tc\tinyzero \tc\tinyone \tc\tinyone \tc\tinyone \tc\tinyone \tc\one \tc\zero \tc\one \tc\one \tc\one \tc\zero \tc\one \tc\one \tc\zero \tc\one \tc\zero \tc\zero \tc\one \tc\one \tc\one \tc\one \tc\one \tc\tinyzero \tc\tinyone \tc\tinyone \tc\tinyone \tc\tinyzero \tc\tinyone \tc\tinyone \tc\tinyone \tc\tinyzero \nocarry{\tc\tinyone} \tc\tinyvarzero \tc\tinyone \tc\tinyvarzero \tc\tinyone \tc\tinyvarzero \tc\tinyone \tc\tinyvarzero \tc\tinyone \tc\tinyvarzero \tc\tinyone \tc\tinyvarzero \tc\tinyone \tc\tinyvarzero \tc\tinyone \tc\tinyvarzero \tc\tinyone \tc\tinyvarzero \tc\tinyone \\
\tc\tinyone \tc\tinyone \tc\tinyvarone \tc\tinyzero \tc\tinyvarzero \tc\tinyzero \tc\tinyvarone \tc\tinyzero \tc\tinyone \tc\tinyone \tc\tinyone \tc\tinyone \tc\tinyvarone \tc\tinyzero \tc\tinyvarzero \tc\tinyzero \nocarry{\tc\tinyvarone} \tc\tinyvarone \tc\tinydunno \tc\tinydunno \tc\tinydunno \tc\tinydunno \tc\tinyvarone \tc\tinyvarone \tc\tinydunno \tc\tinydunno \tc\tinydunno \tc\tinydunno \tc\tinyvarone \tc\tinyvarone \tc\tinydunno \tc\tinydunno \tc\tinydunno \tc\dunno \tc\varone \tc\varone \tc\dunno \tc\dunno \tc\varone \tc\varone \tc\dunno \tc\varone \tc\varone \tc\varone \tc\varzero \tc\varone \tc\dunno \tc\dunno \tc\dunno \tc\dunno \tc\varone \tc\varone \tc\tinydunno \tc\tinydunno \tc\tinyvarone \tc\tinyvarone \tc\tinydunno \tc\tinydunno \tc\tinyvarone \tc\tinyvarone \tc\tinyvarone \tc\tinyvarzero \tc\tinyone \tc\tinyvarzero \tc\tinyone \tc\tinyvarzero \tc\tinyone \tc\tinyvarzero \tc\tinyone \tc\tinyvarzero \tc\tinyone \tc\tinyvarzero \tc\tinyone \tc\tinyvarzero \tc\tinyone \tc\tinyvarzero \tc\tinyone \tc\tinyvarzero \tc\tinyone \\
\tc\tinyone \tc\tinyone \tc\tinyvarone \tc\tinyzero \tc\tinyvarzero \tc\tinyzero \tc\tinyvarone \tc\tinyzero \tc\tinyone \tc\tinyone \tc\tinyone \tc\tinyone \tc\tinyvarone \tc\tinyzero \tc\tinyvarzero \tc\tinyzero \tc\tinyone \tc\tinyone \tc\tinyone \tc\tinyzero \tc\tinyzero \tc\tinyzero \tc\tinyone \tc\tinyone \tc\tinyone \tc\tinyzero \tc\tinyzero \tc\tinyzero \tc\tinyone \tc\tinyone \tc\tinyone \tc\tinyzero \tc\tinyzero \tc\zero \tc\one \tc\one \tc\one \tc\zero \tc\one \tc\one \tc\one \tc\one \tc\one \tc\one \tc\zero \tc\one \tc\one \tc\zero \tc\zero \tc\zero \tc\one \tc\one \tc\tinyone \tc\tinyzero \tc\tinyone \tc\tinyone \tc\tinyone \tc\tinyzero \tc\tinyone \tc\tinyone \tc\tinyone \tc\tinyzero \nocarry{\tc\tinyone} \tc\tinyvarzero \tc\tinyone \tc\tinyvarzero \tc\tinyone \tc\tinyvarzero \tc\tinyone \tc\tinyvarzero \tc\tinyone \tc\tinyvarzero \tc\tinyone \tc\tinyvarzero \tc\tinyone \tc\tinyvarzero \tc\tinyone \tc\tinyvarzero \tc\tinyone \\
\tc\tinyone \tc\tinyone \tc\tinyvarone \tc\tinyzero \tc\tinyvarzero \tc\tinyzero \tc\tinyvarone \tc\tinyzero \tc\tinyone \tc\tinyone \tc\tinyone \tc\tinyone \tc\tinyvarone \tc\tinyzero \nocarry{\tc\tinyvarzero} \tc\tinyvarone \tc\tinydunno \tc\tinydunno \tc\tinyvarone \tc\tinyvarzero \tc\tinyvarzero \tc\tinyvarone \tc\tinydunno \tc\tinydunno \tc\tinyvarone \tc\tinyvarzero \tc\tinyvarzero \tc\tinyvarone \tc\tinydunno \tc\tinydunno \tc\tinyvarone \tc\tinyvarzero \tc\varzero \tc\varone \tc\dunno \tc\dunno \tc\varone \tc\varone \tc\dunno \tc\dunno \tc\dunno \tc\dunno \tc\dunno \tc\varone \tc\varone \tc\dunno \tc\varone \tc\varzero \tc\varzero \tc\varone \tc\dunno \tc\dunno \tc\varone \tc\tinyvarone \tc\tinydunno \tc\tinydunno \tc\tinyvarone \tc\tinyvarone \tc\tinydunno \tc\tinydunno \tc\tinyvarone \tc\tinyvarone \tc\tinyvarone \tc\tinyvarzero \tc\tinyone \tc\tinyvarzero \tc\tinyone \tc\tinyvarzero \tc\tinyone \tc\tinyvarzero \tc\tinyone \tc\tinyvarzero \tc\tinyone \tc\tinyvarzero \tc\tinyone \tc\tinyvarzero \tc\tinyone \tc\tinyvarzero \tc\tinyone \\
\tc\tinyone \tc\tinyone \tc\tinyvarone \tc\tinyzero \tc\tinyvarzero \tc\tinyzero \tc\tinyvarone \tc\tinyzero \tc\tinyone \tc\tinyone \tc\tinyone \tc\tinyone \tc\tinyvarone \tc\tinyzero \tc\tinyzero \tc\tinyone \tc\tinyone \tc\tinyzero \tc\tinyone \tc\tinyzero \tc\tinyzero \tc\tinyone \tc\tinyone \tc\tinyzero \tc\tinyone \tc\tinyzero \tc\tinyzero \tc\tinyone \tc\tinyone \tc\tinyzero \tc\tinyone \tc\tinyzero \tc\zero \tc\one \tc\one \tc\zero \tc\one \tc\one \tc\one \tc\zero \tc\zero \tc\zero \tc\zero \tc\one \tc\one \tc\one \tc\one \tc\zero \tc\zero \tc\one \tc\one \tc\zero \tc\one \tc\tinyone \tc\tinyone \tc\tinyzero \tc\tinyone \tc\tinyone \tc\tinyone \tc\tinyzero \tc\tinyone \tc\tinyone \tc\tinyone \tc\tinyzero \nocarry{\tc\tinyone} \tc\tinyvarzero \tc\tinyone \tc\tinyvarzero \tc\tinyone \tc\tinyvarzero \tc\tinyone \tc\tinyvarzero \tc\tinyone \tc\tinyvarzero \tc\tinyone \tc\tinyvarzero \tc\tinyone \tc\tinyvarzero \tc\tinyone \\
\tc\tinyone \tc\tinyone \tc\tinyvarone \tc\tinyzero \tc\tinyvarzero \tc\tinyzero \tc\tinyvarone \tc\tinyzero \tc\tinyone \tc\tinyone \tc\tinyone \tc\tinyone \nocarry{\tc\tinyvarone} \tc\tinyvarzero \tc\tinyvarone \tc\tinydunno \tc\tinyvarone \tc\tinyvarone \tc\tinyvarone \tc\tinyvarzero \tc\tinyvarone \tc\tinydunno \tc\tinyvarone \tc\tinyvarone \tc\tinyvarone \tc\tinyvarzero \tc\tinyvarone \tc\tinydunno \tc\tinyvarone \tc\tinyvarone \tc\tinyvarone \tc\varzero \tc\varone \tc\dunno \tc\varone \tc\varone \tc\dunno \tc\dunno \tc\varone \tc\varzero \tc\varzero \tc\varzero \tc\varone \tc\dunno \tc\dunno \tc\dunno \tc\varone \tc\varzero \tc\varone \tc\dunno \tc\varone \tc\varone \tc\dunno \tc\dunno \tc\tinyvarone \tc\tinyvarone \tc\tinydunno \tc\tinydunno \tc\tinyvarone \tc\tinyvarone \tc\tinydunno \tc\tinydunno \tc\tinyvarone \tc\tinyvarone \tc\tinyvarone \tc\tinyvarzero \tc\tinyone \tc\tinyvarzero \tc\tinyone \tc\tinyvarzero \tc\tinyone \tc\tinyvarzero \tc\tinyone \tc\tinyvarzero \tc\tinyone \tc\tinyvarzero \tc\tinyone \tc\tinyvarzero \tc\tinyone \\
\tc\tinyone \tc\tinyone \tc\tinyvarone \tc\tinyzero \tc\tinyvarzero \tc\tinyzero \tc\tinyvarone \tc\tinyzero \tc\tinyone \tc\tinyone \tc\tinyone \tc\tinyone \tc\tinyone \tc\tinyzero \tc\tinyone \tc\tinyone \tc\tinyone \tc\tinyone \tc\tinyone \tc\tinyzero \tc\tinyone \tc\tinyone \tc\tinyone \tc\tinyone \tc\tinyone \tc\tinyzero \tc\tinyone \tc\tinyone \tc\tinyone \tc\tinyone \tc\tinyone \tc\zero \tc\one \tc\one \tc\one \tc\one \tc\one \tc\zero \tc\one \tc\zero \tc\zero \tc\zero \tc\one \tc\one \tc\zero \tc\zero \tc\one \tc\zero \tc\one \tc\one \tc\one \tc\one \tc\one \tc\zero \tc\tinyone \tc\tinyone \tc\tinyone \tc\tinyzero \tc\tinyone \tc\tinyone \tc\tinyone \tc\tinyzero \tc\tinyone \tc\tinyone \tc\tinyone \tc\tinyzero \nocarry{\tc\tinyone} \tc\tinyvarzero \tc\tinyone \tc\tinyvarzero \tc\tinyone \tc\tinyvarzero \tc\tinyone \tc\tinyvarzero \tc\tinyone \tc\tinyvarzero \tc\tinyone \tc\tinyvarzero \tc\tinyone \\
\tc\tinyone \tc\tinyone \tc\tinyvarone \tc\tinyzero \tc\tinyvarzero \tc\tinyzero \nocarry{\tc\tinyvarone} \tc\tinyvarone \tc\tinydunno \tc\tinydunno \tc\tinydunno \tc\tinydunno \tc\tinyvarone \tc\tinyvarone \tc\tinydunno \tc\tinydunno \tc\tinydunno \tc\tinydunno \tc\tinyvarone \tc\tinyvarone \tc\tinydunno \tc\tinydunno \tc\tinydunno \tc\tinydunno \tc\tinyvarone \tc\tinyvarone \tc\tinydunno \tc\tinydunno \tc\tinydunno \tc\tinydunno \tc\varone \tc\varone \tc\dunno \tc\dunno \tc\dunno \tc\dunno \tc\varone \tc\varone \tc\varone \tc\varzero \tc\varzero \tc\varone \tc\dunno \tc\varone \tc\varzero \tc\varone \tc\varone \tc\varone \tc\dunno \tc\dunno \tc\dunno \tc\dunno \tc\varone \tc\varone \tc\dunno \tc\tinydunno \tc\tinyvarone \tc\tinyvarone \tc\tinydunno \tc\tinydunno \tc\tinyvarone \tc\tinyvarone \tc\tinydunno \tc\tinydunno \tc\tinyvarone \tc\tinyvarone \tc\tinyvarone \tc\tinyvarzero \tc\tinyone \tc\tinyvarzero \tc\tinyone \tc\tinyvarzero \tc\tinyone \tc\tinyvarzero \tc\tinyone \tc\tinyvarzero \tc\tinyone \tc\tinyvarzero \tc\tinyone \\
\tc\tinyone \tc\tinyone \tc\tinyvarone \tc\tinyzero \tc\tinyvarzero \tc\tinyzero \tc\tinyone \tc\tinyone \tc\tinyone \tc\tinyzero \tc\tinyzero \tc\tinyzero \tc\tinyone \tc\tinyone \tc\tinyone \tc\tinyzero \tc\tinyzero \tc\tinyzero \tc\tinyone \tc\tinyone \tc\tinyone \tc\tinyzero \tc\tinyzero \tc\tinyzero \tc\tinyone \tc\tinyone \tc\tinyone \tc\tinyzero \tc\tinyzero \tc\tinyzero \tc\one \tc\one \tc\one \tc\zero \tc\zero \tc\zero \tc\one \tc\one \tc\one \tc\zero \tc\zero \tc\one \tc\one \tc\one \tc\zero \tc\one \tc\one \tc\one \tc\one \tc\zero \tc\zero \tc\zero \tc\one \tc\one \tc\one \tc\tinyzero \tc\tinyone \tc\tinyone \tc\tinyone \tc\tinyzero \tc\tinyone \tc\tinyone \tc\tinyone \tc\tinyzero \tc\tinyone \tc\tinyone \tc\tinyone \tc\tinyzero \nocarry{\tc\tinyone} \tc\tinyvarzero \tc\tinyone \tc\tinyvarzero \tc\tinyone \tc\tinyvarzero \tc\tinyone \tc\tinyvarzero \tc\tinyone \tc\tinyvarzero \tc\tinyone \\
\tc\tinyone \tc\tinyone \tc\tinyvarone \tc\tinyzero \nocarry{\tc\tinyvarzero} \tc\tinyvarone \tc\tinydunno \tc\tinydunno \tc\tinyvarone \tc\tinyvarzero \tc\tinyvarzero \tc\tinyvarone \tc\tinydunno \tc\tinydunno \tc\tinyvarone \tc\tinyvarzero \tc\tinyvarzero \tc\tinyvarone \tc\tinydunno \tc\tinydunno \tc\tinyvarone \tc\tinyvarzero \tc\tinyvarzero \tc\tinyvarone \tc\tinydunno \tc\tinydunno \tc\tinyvarone \tc\tinyvarzero \tc\tinyvarzero \tc\varone \tc\dunno \tc\dunno \tc\varone \tc\varzero \tc\varzero \tc\varone \tc\dunno \tc\dunno \tc\varone \tc\varzero \tc\varone \tc\dunno \tc\dunno \tc\varone \tc\varone \tc\dunno \tc\dunno \tc\dunno \tc\varone \tc\varzero \tc\varzero \tc\varone \tc\dunno \tc\dunno \tc\varone \tc\varone \tc\tinydunno \tc\tinydunno \tc\tinyvarone \tc\tinyvarone \tc\tinydunno \tc\tinydunno \tc\tinyvarone \tc\tinyvarone \tc\tinydunno \tc\tinydunno \tc\tinyvarone \tc\tinyvarone \tc\tinyvarone \tc\tinyvarzero \tc\tinyone \tc\tinyvarzero \tc\tinyone \tc\tinyvarzero \tc\tinyone \tc\tinyvarzero \tc\tinyone \tc\tinyvarzero \tc\tinyone \\
\tc\tinyone \tc\tinyone \tc\tinyvarone \tc\tinyzero \tc\tinyzero \tc\tinyone \tc\tinyone \tc\tinyzero \tc\tinyone \tc\tinyzero \tc\tinyzero \tc\tinyone \tc\tinyone \tc\tinyzero \tc\tinyone \tc\tinyzero \tc\tinyzero \tc\tinyone \tc\tinyone \tc\tinyzero \tc\tinyone \tc\tinyzero \tc\tinyzero \tc\tinyone \tc\tinyone \tc\tinyzero \tc\tinyone \tc\tinyzero \tc\tinyzero \tc\one \tc\one \tc\zero \tc\one \tc\zero \tc\zero \tc\one \tc\one \tc\zero \tc\one \tc\zero \tc\one \tc\one \tc\zero \tc\one \tc\one \tc\one \tc\zero \tc\zero \tc\one \tc\zero \tc\zero \tc\one \tc\one \tc\zero \tc\one \tc\one \tc\tinyone \tc\tinyzero \tc\tinyone \tc\tinyone \tc\tinyone \tc\tinyzero \tc\tinyone \tc\tinyone \tc\tinyone \tc\tinyzero \tc\tinyone \tc\tinyone \tc\tinyone \tc\tinyzero \nocarry{\tc\tinyone} \tc\tinyvarzero \tc\tinyone \tc\tinyvarzero \tc\tinyone \tc\tinyvarzero \tc\tinyone \tc\tinyvarzero \tc\tinyone \\
\tc\tinyone \tc\tinyone \nocarry{\tc\tinyvarone} \tc\tinyvarzero \tc\tinyvarone \tc\tinydunno \tc\tinyvarone \tc\tinyvarone \tc\tinyvarone \tc\tinyvarzero \tc\tinyvarone \tc\tinydunno \tc\tinyvarone \tc\tinyvarone \tc\tinyvarone \tc\tinyvarzero \tc\tinyvarone \tc\tinydunno \tc\tinyvarone \tc\tinyvarone \tc\tinyvarone \tc\tinyvarzero \tc\tinyvarone \tc\tinydunno \tc\tinyvarone \tc\tinyvarone \tc\tinyvarone \tc\tinyvarzero \tc\varone \tc\dunno \tc\varone \tc\varone \tc\varone \tc\varzero \tc\varone \tc\dunno \tc\varone \tc\varone \tc\varone \tc\varone \tc\dunno \tc\varone \tc\varone \tc\dunno \tc\dunno \tc\varone \tc\varzero \tc\varone \tc\varone \tc\varzero \tc\varone \tc\dunno \tc\varone \tc\varone \tc\dunno \tc\dunno \tc\varone \tc\tinyvarone \tc\tinydunno \tc\tinydunno \tc\tinyvarone \tc\tinyvarone \tc\tinydunno \tc\tinydunno \tc\tinyvarone \tc\tinyvarone \tc\tinydunno \tc\tinydunno \tc\tinyvarone \tc\tinyvarone \tc\tinyvarone \tc\tinyvarzero \tc\tinyone \tc\tinyvarzero \tc\tinyone \tc\tinyvarzero \tc\tinyone \tc\tinyvarzero \tc\tinyone \\
\tc\tinyone \tc\tinyone \tc\tinyone \tc\tinyzero \tc\tinyone \tc\tinyone \tc\tinyone \tc\tinyone \tc\tinyone \tc\tinyzero \tc\tinyone \tc\tinyone \tc\tinyone \tc\tinyone \tc\tinyone \tc\tinyzero \tc\tinyone \tc\tinyone \tc\tinyone \tc\tinyone \tc\tinyone \tc\tinyzero \tc\tinyone \tc\tinyone \tc\tinyone \tc\tinyone \tc\tinyone \tc\tinyzero \tc\one \tc\one \tc\one \tc\one \tc\one \tc\zero \tc\one \tc\one \tc\one \tc\one \tc\one \tc\one \tc\one \tc\one \tc\one \tc\one \tc\zero \tc\one \tc\zero \tc\one \tc\one \tc\zero \tc\one \tc\one \tc\one \tc\one \tc\one \tc\zero \tc\one \tc\tinyone \tc\tinyone \tc\tinyzero \tc\tinyone \tc\tinyone \tc\tinyone \tc\tinyzero \tc\tinyone \tc\tinyone \tc\tinyone \tc\tinyzero \tc\tinyone \tc\tinyone \tc\tinyone \tc\tinyzero \nocarry{\tc\tinyone} \tc\tinyvarzero \tc\tinyone \tc\tinyvarzero \tc\tinyone \tc\tinyvarzero \tc\tinyone \\
\cellsend$\end{center}
In Wolfram and Cook's proof of the universality of $R$, the state of
the emulated universal cyclic tag system $U$ is recovered from the
causal future of the input $I$ alone (possible since the update rule
observes only nearest neighbours).
Thus our emulation of $R$ on the future of $I$ suffices for universality of
the Turing machine.

In overview, the wrap construction on $X$ and $Y$ will proceed as
follows.
Consider the case $X=111011$ as in the example above.  Run the
\emph{wrapped} form of the cellular automaton on $X$,
that is, with just
six
cells (the length of $X$), and where the last cell is formally
considered to be the left neighbour of the first.
Truncate the computation just as a row is about to
recur (in this case, the first row).
The result is below-left:
\begin{center}
\renewcommand{\cellw}{3ex}
\vspace*{.5ex}
$\cells
\cell 1 \cell 1 \cell 1 \cell 0 \cell 1 \cell 1 \\
\cell 0 \cell 0 \cell 1 \cell 1 \cell 1 \cell 0 \\
\cell 0 \cell 1 \cell 1 \cell 0 \cell 1 \cell 0 \\
\cell 1 \cell 1 \cell 1 \cell 1 \cell 1 \cell 0 \\
\cell 1 \cell 0 \cell 0 \cell 0 \cell 1 \cell 1 \\
\cell 1 \cell 0 \cell 0 \cell 1 \cell 1 \cell 0 \\
\cell 1 \cell 0 \cell 1 \cell 1 \cell 1 \cell 1 \\
\cell 1 \cell 1 \cell 1 \cell 0 \cell 0 \cell 0 \\
\cell 1 \cell 0 \cell 1 \cell 0 \cell 0 \cell 1
\cellsend$
\hspace*{20ex}%
\newcommand{\m}{\begin{picture}(0,0)\put(0,0){\qbezier[4](1,7)(7,10)(13,14)}\end{picture}}%
\newcommand{\p}{\begin{picture}(0,0)\put(1,1){\qbezier[2](1,3)(4,3)(7,3)}\end{picture}}%
\newcommand{\q}{\begin{picture}(0,0)\put(0,1){\qbezier[2](0,3)(3,3)(6,3)}\end{picture}}%
\newcommand{\n}{\begin{picture}(0,0)\put(-2,2){\vector(-2,-1){20}}\end{picture}}%
\newcommand{\x}{\begin{picture}(0,0)\put(1,4){\vector(-1,0){10}}\end{picture}}%
\newcommand{\xx}{\begin{picture}(0,0)\put(1,4){\vector(-1,0){8}}\end{picture}}%
\newcommand{\I}{\mbox{\x\large$\mathbf{1}$}}%
\newcommand{\J}{\mbox{\xx\large$\mathbf{1}$}}%
\newcommand{\Q}{\mbox{\n\large$\mathbf{0}$}}%
\newcommand{\Qi}{\mbox{\n\large$\mathbf{0}$\m}}%
\newcommand{\Qj}{\mbox{\n\large$\mathbf{0}$\p}}%
\newcommand{\Ij}{\mbox{\x\large$\mathbf{1}$\q}}%
\newcommand{\Ii}{\mbox{\x\large$\mathbf{1}$\m}}%
$\cells
\cell\I \cell\Ii\cell 1 \cell\Q \cell\J \cell\Ij \\
\cell 0 \cell\Q \cell 1 \cell 1 \cell 1 \cell 0 \\
\cell 0 \cell 1 \cell 1 \cell 0 \cell 1 \cell\Qi \\
\cell\I \cell\I \cell\I \cell\I \cell 1 \cell\Qj \\
\cell 1 \cell 0 \cell 0 \cell\Q \cell 1 \cell 1 \\
\cell 1 \cell\Q \cell 0 \cell 1 \cell 1 \cell 0 \\
\cell 1 \cell\Q \cell\J \cell\I \cell\I \cell\Ii \\
\cell 1 \cell 1 \cell 1 \cell 0 \cell 0 \cell\Qi \\
\cell 1 \cell 0 \cell 1 \cell\Q \cell 0 \cell 1
\cellsend$
\vspace*{.5ex}
\end{center}
Place a cursor on the top-right cell of the matrix, then build a word $W$ by
repeating:\label{left-wrap-informal}
\begin{itemize}
\item[$\star$] Write down the symbol $s$ at the cursor.
\begin{itemize}
\item If $s=1$, move the cursor to the cell to the left. Go to $\star$.
\item If $s=0$, let $t$ be the symbol to the left of the cursor.  Write down $\underline t$.  
Move the cursor to the cell which is two columns to the left and one
row down.  Go to $\star$.
\end{itemize}
\end{itemize}
The $9\times 6$ matrix is \emph{wrapped}: moving the cursor down from
the bottom row takes us to the top row\footnote{More generally, moving
down from the bottom row takes us to the row which was about to recur.
For this $X$, that row was the first row.  See Section~\ref{details}
for details.}  (column unchanged), and moving the cursor left from
the first column takes us to the last column (row unchanged).
The figure above-right shows in bold every cell which is visited, with
cursor moves as arrows.
We terminate when the cursor lands on a cell which has already been
visited (in this case, the top-right $\mathbf{1}$ cell where we started).  
Every time we write a new symbol, we add it to the \emph{left} of $W$.
The resulting 30-symbol word $W$ is:
$$
11\varone0\varzero0\varone011
\;\;11\varone0\varzero0\varone011
\;\;11\varone0\varzero0\varone011
$$ 
(The gaps merely emphasise repetition in $W$.) We define $\leftseed{X}$, 
the \defn{left seed}, as the shortest word which yields $W$ by
repetition:
$$\leftseed  X \;\;=\;\; 11\varone 0 \varzero 0\varone 011$$
The reader can verify that this is indeed the word seeding the
tape to the left of the input $I=10011$ in the successful emulation above.  

The wrap construction yields $\rightseed{Y}$ from $Y$ in a similar
manner.  Consider $Y=1101$, again from the example emulation above.
Running the wrapped form of the cellular automaton on $Y$ yields just
two rows (since $1101$ recurs after only the second wrapped application of the
rule), as below-left:
\begin{center}\renewcommand{\cellw}{3ex}
\vspace*{.5ex}
$\cells
\cell 1 \cell 1 \cell 0 \cell 1 \\
\cell 0 \cell 1 \cell 1 \cell 1
\cellsend$
\hspace*{24ex}
\newcommand{\m}{\begin{picture}(0,0)\put(-1,0){\qbezier[4](-1,7)(-7,10)(-13,13)}\end{picture}}%
\newcommand{\n}{\begin{picture}(0,0)\put(1,2){\vector(2,-1){20}}\end{picture}}%
\newcommand{\x}{\begin{picture}(0,0)\put(-1,4){\vector(1,0){10}}\end{picture}}%
\newcommand{\y}{\begin{picture}(0,0)\put(-1,4){\vector(1,0){8}}\end{picture}}%
\newcommand{\Q}{\mbox{\large\m$\mathbf{0}$\n}}%
\newcommand{\I}{\mbox{\large$\mathbf{1}$\x}}%
\newcommand{\J}{\mbox{\large$\mathbf{1}$\y}}%
$\cells
\cell\I \cell\J \cell\Q \cell 1 \\
\cell\Q \cell 1 \cell 1 \cell 1
\cellsend$
\vspace*{.5ex}
\end{center}
Place a cursor on the top-left cell, then build a word $W$ by
repeating:\label{right-wrap-informal}
\begin{itemize}
\item[$\star$] Let $s$ be the symbol at the cursor.  Write down $\underline s$ (the underlined variant of $s$).
\begin{itemize}
\item If $s=1$, move the cursor to the cell to the right. Go to $\star$.
\item If $s=0$, move the cursor to the cell to the right. Go to $\star\star$.
\end{itemize}
\item[$\star\star$] Let $s$ be the symbol at the cursor.  Write down $\underline s$ (the underlined variant of $s$).
\begin{itemize}
\item If $s=0$, write down $\underline 0$ and move the cursor to the
  cell to the right. Go to $\star\star$.
\item If $s=1$, write down $1$ and move the cursor to the cell which
  is one column to the right and one row down.  Go to $\star$.
\end{itemize}
\end{itemize}
As before, we terminate when the cursor lands on a cell which has
already been visited (in this case, the third cell, $\mathbf{0}$, in the top
row).
Every time we write down a new symbol, we add it to the \emph{right}
of $W$.  
The resulting 6-symbol word $W$ is:
$$\varone\,\varone\,\varzero\,1\,\varzero\,1$$
The first two symbols $\varone\varone$ in $W$ came from the two $1$'s
at the beginning of the top row in the $2\times 4$ matrix.  These two $1$'s are
not part of the cycle of the cursor: were we to continue making cursor
moves, we would never revisit them.  This part of $W$
becomes $\rightstem Y$, called the \defn{right stem}:
$$\rightstem Y\;\;=\;\;\varone\,\varone$$
The shortest word whose repetition yields the remainder
$\varzero\,1\,\varzero\,1$ of $W$ becomes $\rightseed Y$, the \defn{right
seed}:\footnote{The \defn{left stem} $\leftstem X$ was empty (since the
cursor returned to the top-right cell of the $9\times 6$ grid, on
which it started), therefore we simplified the exposition by not
mentioning it.  In general, however, there may be a non-empty left
stem.}
$$\rightseed Y\;\;=\;\;\varzero\,1$$
Observe that the initial tape to the right of the input $I$ in the
emulation above (p.\,\pageref{emulation}) is the right stem $\rightstem Y$ followed by the
infinite repetition of the right seed $\rightseed Y$.

The construction described here does not work for words $X$ and $Y$
with the property that one of the rows in our constructed matrix
consists of 0s only. For those words a simpler construction is
possible. We do not describe this simpler construction here, as the
words $X$ and $Y$ used by Wolfram and Cook in the simulation of $U$ by
$R$ have the property that such a row of 0s will not emerge.

\section{Formal details}\label{details}\vspace*{1ex}

Let $\N=\{0,1,\ldots\}$ and $\Z=\{\ldots,-1,0,1,\ldots\}$.

\vspace*{-1ex}\paragraph*{Words.}
Let $\Sigma$ be a set of \defn{symbols}.  A \defn{word over $\Sigma$},
or \defn{$\Sigma$-word}, of length $k\in\N$ is a
function $w:\{0,\ldots,k-1\}\to \Sigma$.
We shall often write the sequence $w(0)w(1)\cdots w(k-1)$ of outputs
of $w$, in order, to denote $w$.
For example, $1011$ denotes the $\{0,1\}$-word $w$ of length 4
with $w(0)=w(2)=w(3)=1$ and $w(1)=0$.
A length 0 word is \defn{empty}.
Define the \defn{reverse} $\rev w$ of $w$, also of length $k$, by
$\rev w(i)=w(k-1-i)$.
For example $\rev{(11010)}=01011$.
A $\Sigma$-word $w$ of length $k$ \defn{contains} $x\in\Sigma$ if $w(i)=x$ for some
$i\in\{0,\ldots,k-1\}$.

Let $v$ and $w$ be $\Sigma$-words of length $k$ and $l$, respectively.
The \defn{concatenation} $v\concat w$ of $v$ and $w$ is the
$\Sigma$-word of length $k+l$ defined by $(v\concat w)(i)=v(i)$ for
$0\le i<k$ and $(v\concat w)(k+j)=w(j)$ for $0\le j<l$.
For example, if $v=1011$ and $w=00001$ then $v\concat
w\,=\,101100001$.  Since $u\concat(v\concat w)=(u\concat v)\concat
w$ we can write $u\concat v\concat w$ without ambiguity.
For a $\Sigma$-word $w$ of length $k$ and $0\le a<k$ define $w_{<a}$
as the restriction of $w$ to $\{0,\ldots,a-1\}$ (a $\Sigma$-word of
length $a$) and define $w_{\ge a}$ as the remainder: the unique
$\Sigma$-word $v$ such that $w=w_{<a}\concat v$ (a $\Sigma$-word of
length $k-a$).

Write $\Sigma^*$ for the set of $\Sigma$-words.
Given a $\Sigma^*$-word $W$ of length $k$ (a ``word of words'') 
its \defn{flattening} $W^\concat$ is the $\Sigma$-word $W(0)\concat
W(1)\concat\cdots\concat W(k-1)$.
For example, if $W\;=\;(111)(00)(1101)$ (the $\{0,1\}^*$-word of
length $3$ given by $W(0)=111$, $W(1)=00$ and $W(2)=1101$) then
$W^\concat\,=\,111001101$.
The \defn{$n$-fold repetition} $w^n$ of $w$ is empty if $n=0$ and
$w\concat w^{n-1}$ if $n>0$ (\ie, $w\concat w\concat\cdots\concat w$ with $n$ occurrences of $w$).
The \defn{reduction} $\red{w}$ of $w$, when it exists, is the shortest
(minimal length) word $r$ such that $w=r^n$ for some $n\ge 1$.
For example, $|110110110110|=110$.

\paragraph*{State.} A \defn{state over $\Sigma$}, or \defn{$\Sigma$-state},
is a function $S\,:\,\Z\,\to\,\Sigma$.  Each $c$ in the domain
$\Z$ of $S$ is a \defn{cell}.  

Let $A,I,B$ be a $\Sigma$-words of lengths $a,l,b$ respectively.
Define $$\cells \cell{\repleft A}\;\, \cell I \,\; \cell{\repright B}
\cellsend$$ as the $\Sigma$-state $S:\Z\to\Sigma$ comprising $I$ on cells $0$ to $l-1$,
infinite repetitions of $A$ to the left, and infinite repetitions of
$B$ to the right: $$S(c)=
\begin{cases}
\,I(c)
     & \text{if }0\le c< l \\
\,A(c\remainder a)
     & \text{if }c < 0 \\
\,B((c-l)\remainder b)
     & \text{if }c \ge l
\end{cases}$$
Given additional $\Sigma$-words $P$ and $Q$ of lengths $p$ and $q$,
respectively, define 
$$\cells \cell{\repleft A}\;\;\,\cell{P}\, \cell I\, \cell{Q} \,\;\; \cell{\repright B}
\cellsend$$
as the $\Sigma$-state $T:\Z\to\Sigma$ obtained by inserting $P$ and
$Q$ either side of $I$ in $\repleft A\: I\:\repright B\:$: 
$$T(c)=
\begin{cases}
\,I(c)
     & \text{if }0\le c<l  \\
\,A((c+p)\remainder a)
     & \text{if }c < -p \\
\,B((c-l-q)\remainder m)
     & \text{if }c \ge l+q\\
\,P(c+p)
     & \text{if }-p \le c < 0\\
\,Q(c-l)
     & \text{if }l \le c < l+q
\end{cases}$$

\subsection{The rule 110 cellular automaton $R$}
Define
\defn{spacetime} as the product $\N\times \Z$ where
$\N$ is the set of \defn{times} and
$\Z$ is the set of \defn{cells} (space).
Each ordered pair $(t,c)\in\N\times \Z$ is an \defn{event}
(spacetime coordinate).
A \defn{run} of $R$ is a function $\rho:\N\times\Z\to\{0,1\}$ such
that for all times $t\in \N$ and cells $c\in \Z$ the following condition holds:
\begin{itemize}
\item \defn{Causality}.
$\rho(t+1,c)\neq \rho(t,c)$ if and only if:
\begin{itemize}
\item (\textsl{Birth}\footnote{``A $1$ is born from a $0$ whose right neighbour is $1$,''
 visually \mbox{\scriptsize$\begin{array}{c@{}c}0&1\\[-.8ex]1\end{array}$}.}) 
$\,\;\;\rho(t,c)=0$ and $\rho(t,c+1)=1\;$, or
\item (\textsl{Death}\footnote{``A $1$ dies by overcrowding when both neighbours are $1$,'' 
visually \mbox{\scriptsize$\begin{array}{c@{}c@{}c}1&1&1\\[-.8ex]&0\end{array}$}.}) 
$\;\rho(t,c-1)=\rho(t,c)=\rho(t,c+1)=1\,$.
\end{itemize}
\end{itemize}
For $t\in\N$ the \defn{$t\nth$ state} or \defn{state at time $t$} of
a run $\rho$, denoted $\rho_t$, is the $\{0,1\}$-state
$\rho_t:\Z\to\{0,1\}$ given by $\rho_t(c)=\rho(t,c)$ for all cells
$c\in\Z$.
The \defn{initial state} of $\rho$ is $\rho_0$.
Note that $R$ is deterministic: the state $\rho_t$ at each time $t\in\N$ is
determined by the initial state $\rho_0$.

\paragraph*{Wrapped rule 110.}

For $n\in\N$ define \defn{$n$-wrapped spacetime}
as $\N\times \{0,\ldots,n-1\}$.
An \defn{$n$-wrapped run}
of the cellular automaton
$R$ is a function $\rho:\N\times \{0,\ldots,n-1\}\to \{0,1\}$ which,
for all times $t\in\N$ and cells $c\in\{0,\ldots,n-1\}$, satisfies
the \textit{Causality} condition defined above upon interpreting $\rho(t,n)$ as $\rho(t,0)$ and $\rho(t,-1)$ as $\rho(t,n-1)$.
A $6$-wrapped run $\rho$ is depicted below-left.
\begin{center}
\renewcommand{\cellw}{3ex}
\vspace*{.5ex}
$\cells
 \cell 0 \cell 0 \cell 0 \cell 1 \cell 0 \cell 1 \\
 \cell 0 \cell 0 \cell 1 \cell 1 \cell 1 \cell 1 \\
 \cell 0 \cell 1 \cell 1 \cell 0 \cell 0 \cell 1 \\
\cell 1 \cell 1 \cell 1 \cell 0 \cell 1 \cell 1 \\
\cell 0 \cell 0 \cell 1 \cell 1 \cell 1 \cell 0 \\
\cell 0 \cell 1 \cell 1 \cell 0 \cell 1 \cell 0 \\
\cell 1 \cell 1 \cell 1 \cell 1 \cell 1 \cell 0 \\
\cell 1 \cell 0 \cell 0 \cell 0 \cell 1 \cell 1 \\
\cell 1 \cell 0 \cell 0 \cell 1 \cell 1 \cell 0 \\
\cell 1 \cell 0 \cell 1 \cell 1 \cell 1 \cell 1 \\
\cell 1 \cell 1 \cell 1 \cell 0 \cell 0 \cell 0 \\
\cell 1 \cell 0 \cell 1 \cell 0 \cell 0 \cell 1 \\
\cell 1 \cell 1 \cell 1 \cell 0 \cell 1 \cell 1 \\
\cell 0 \cell 0 \cell 1 \cell 1 \cell 1 \cell 0 \\
\cell 0 \cell 1 \cell 1 \cell 0 \cell 1 \cell 0 \\
\cell 1 \cell 1 \cell 1 \cell 1 \cell 1 \cell 0 \\
\cell{\vdots}
\cellsend$
\hspace*{8ex}
$\cells
 \cell 0 \cell 0 \cell 0 \cell 1 \cell 0 \cell 1 \\
 \cell 0 \cell 0 \cell 1 \cell 1 \cell 1 \cell 1 \\
 \cell 0 \cell 1 \cell 1 \cell 0 \cell 0 \cell 1 \\[-.1ex] \hline
\cell 1 \cell 1 \cell 1 \cell 0 \cell 1 \cell 1 \\
\cell 0 \cell 0 \cell 1 \cell 1 \cell 1 \cell 0 \\
\cell 0 \cell 1 \cell 1 \cell 0 \cell 1 \cell 0 \\
\cell 1 \cell 1 \cell 1 \cell 1 \cell 1 \cell 0 \\
\cell 1 \cell 0 \cell 0 \cell 0 \cell 1 \cell 1 \\
\cell 1 \cell 0 \cell 0 \cell 1 \cell 1 \cell 0 \\
\cell 1 \cell 0 \cell 1 \cell 1 \cell 1 \cell 1 \\
\cell 1 \cell 1 \cell 1 \cell 0 \cell 0 \cell 0 \\
\cell 1 \cell 0 \cell 1 \cell 0 \cell 0 \cell 1 \\ 
\cell{} \\ \cell{} \\ \cell{} \\ \cell{} \\ \cell{} \\[1.2ex]
\cellsend$
\hspace*{8ex}
\newcommand{\m}{\begin{picture}(0,0)\put(0,0){\qbezier[4](1,7)(7,10)(13,14)}\end{picture}}%
\newcommand{\p}{\begin{picture}(0,0)\put(1,1){\qbezier[2](1,3)(4,3)(7,3)}\end{picture}}%
\newcommand{\q}{\begin{picture}(0,0)\put(0,1){\qbezier[2](0,3)(3,3)(6,3)}\end{picture}}%
\newcommand{\n}{\begin{picture}(0,0)\put(-2,2){\vector(-2,-1){20}}\end{picture}}%
\newcommand{\x}{\begin{picture}(0,0)\put(1,4){\vector(-1,0){10}}\end{picture}}%
\newcommand{\xx}{\begin{picture}(0,0)\put(1,4){\vector(-1,0){8}}\end{picture}}%
\newcommand{\I}{\mbox{\x\large$\mathbf{1}$}}%
\newcommand{\J}{\mbox{\xx\large$\mathbf{1}$}}%
\newcommand{\Q}{\mbox{\n\large$\mathbf{0}$}}%
\newcommand{\Qi}{\mbox{\n\large$\mathbf{0}$\m}}%
\newcommand{\Qj}{\mbox{\n\large$\mathbf{0}$\p}}%
\newcommand{\Ij}{\mbox{\x\large$\mathbf{1}$\q}}%
\newcommand{\Ii}{\mbox{\x\large$\mathbf{1}$\m}}%
\newcommand{\Ji}{\mbox{\xx\large$\mathbf{1}$\m}}%
$\cells
 \cell 0 \cell 0 \cell 0 \cell 1 \cell\Q \cell\J \\
 \cell 0 \cell\Q \cell\J \cell 1 \cell 1 \cell 1 \\
 \cell 0 \cell 1 \cell 1 \cell 0 \cell\Q \cell\Ji \\[-.1ex] \hline
\cell\I \cell\Ii\cell\I \cell\Q \cell\J \cell\Ij \\
\cell 0 \cell\Q \cell 1 \cell 1 \cell 1 \cell 0 \\
\cell 0 \cell 1 \cell 1 \cell 0 \cell 1 \cell\Qi \\
\cell\I \cell\I \cell\I \cell\I \cell 1 \cell\Qj \\
\cell 1 \cell 0 \cell 0 \cell\Q \cell 1 \cell 1 \\
\cell 1 \cell\Q \cell 0 \cell 1 \cell 1 \cell 0 \\
\cell 1 \cell\Q \cell\J \cell\I \cell\I \cell\Ii \\
\cell 1 \cell 1 \cell 1 \cell 0 \cell 0 \cell\Qi \\
\cell 1 \cell 0 \cell 1 \cell\Q \cell 0 \cell 1 \\[5ex]
\textbf{Figure A}
\\[7.3ex]
\cellsend$
\end{center}
The
\defn{$t\nth$ word} of an $n$-wrapped run $\rho$, denoted $\rho_t$,
is given by $\rho_t(c)=\rho(t,c)$ for all $c\in\{0,\ldots,n-1\}$.
The \defn{initial word} of $\rho$ is $\rho_0$.
Wrapped runs are deterministic: $\rho_t$ is determined at each
time $t$ by the initial word $\rho_0$.
The figure above-left shows words $\rho_0$ to $\rho_{15}$, from top to bottom.

The \defn{onset of periodicity} in $\rho$ is the least time $\alpha\in\N$
such that $\rho_\alpha=\rho_{\alpha+\delta}$ for some
$\delta>0$.\footnote{The onset of periodicity $\alpha$ exists since there are at most
$2^n$ distinct $\{0,1\}$-words of length $n$.}
The least such $\delta$ is the \defn{period} of $\rho$, and $\alpha+\delta$
is the \defn{time of first repetition}.
In the example above-left, $\alpha=3$ and $\delta=9$
($\rho_3=\rho_{3+9}=111011$).

The following lemma is trivial, but we state and prove it properly nonetheless.
\begin{lemma}[Periodicity]
Let $\rho$ be an $n$-wrapped run with period $\delta$ and onset of periodicity $\alpha$.
Then
$\rho_{t+\delta}=\rho_t$ for all $t\ge\alpha$.
\end{lemma}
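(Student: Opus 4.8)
The plan is to reduce the whole statement to a single structural fact---that an $n$-wrapped run is generated by iterating one fixed map---and then to iterate. First I would make precise the determinism already asserted in the text. For a fixed time $t$, the \textit{Causality} condition is an ``if and only if'' that specifies \emph{exactly} when $\rho(t+1,c)\neq\rho(t,c)$, purely in terms of the triple $\rho(t,c-1),\rho(t,c),\rho(t,c+1)$ (reading $\rho(t,n)$ as $\rho(t,0)$ and $\rho(t,-1)$ as $\rho(t,n-1)$, per the wrapped conventions). Since knowing whether $\rho(t+1,c)$ \emph{agrees with} or \emph{differs from} $\rho(t,c)$ pins down $\rho(t+1,c)\in\{0,1\}$ outright, every cell of $\rho_{t+1}$ is a function of $\rho_t$ alone. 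Hence there is a single map $F$ on the set of $\{0,1\}$-words of length $n$ with $\rho_{t+1}=F(\rho_t)$ for every $t\in\N$. This is the only place the definition of $R$ is used.

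The core is then a one-line induction on $k\in\N$ proving that $\rho_{\alpha+k}=\rho_{\alpha+\delta+k}$. For the base case $k=0$ this is exactly $\rho_\alpha=\rho_{\alpha+\delta}$, which holds by the defining property of the onset of periodicity $\alpha$ and the period $\delta$. For the inductive step, assuming $\rho_{\alpha+k}=\rho_{\alpha+\delta+k}$, I would apply $F$ to both sides and use $\rho_{t+1}=F(\rho_t)$ twice, giving $\rho_{\alpha+k+1}=F(\rho_{\alpha+k})=F(\rho_{\alpha+\delta+k})=\rho_{\alpha+\delta+k+1}$. To finish, given any $t\ge\alpha$ I would set $k=t-\alpha\ge 0$ and read off $\rho_t=\rho_{\alpha+k}=\rho_{\alpha+\delta+k}=\rho_{t+\delta}$, which is the claim.

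As the paper already flags, there is no real obstacle here; the statement is trivial once determinism is invoked. The only point needing a moment's care---and the one I would state explicitly rather than gloss over---is the justification that the transition map $F$ is genuinely well defined, i.e.\ that the ``iff'' phrasing of \textit{Causality} (which literally constrains only \emph{whether} a bit flips) nonetheless determines $\rho_{t+1}$ uniquely from $\rho_t$. Everything else is the mechanical iteration above, and the minimality of $\alpha$ and $\delta$ is not even needed for this direction---only their defining equation $\rho_\alpha=\rho_{\alpha+\delta}$ is used.
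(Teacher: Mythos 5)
Your proof is correct and follows essentially the same route as the paper: induction starting at the base case $\rho_\alpha=\rho_{\alpha+\delta}$ (the definition of $\alpha$ and $\delta$), with the inductive step supplied by determinism of the update, which the paper cites simply as ``by the \textit{Causality} condition'' and which you usefully make explicit by constructing the transition map $F$ and noting that the ``iff'' phrasing pins down each bit of $\rho_{t+1}$.
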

\begin{proof}
By induction on $t$.  Induction base: the condition holds for
$t=\alpha$, by definition of $\alpha$.  Induction step: if
$\rho_t=\rho_u$ then $\rho_{t+1}=\rho_{u+1}$, by the
\textit{Causality} condition defining an $n$-wrapped run.
\end{proof}

\subsection{The wrap constructions}

\paragraph*{The left wrap construction.}
Let $\rho$ be an $n$-wrapped run with period $\delta$ and onset of
periodicity $\alpha$.
Let $\beta=\alpha+\delta$, the time of first repetition.
The \defn{matrix} $\matr{\rho}$ of $\rho$ is the restriction of $\rho$
to times prior to $\beta$, \ie, the restriction of $\rho$ to the domain
$\{0,\ldots,\beta-1\}\times\{0,\ldots,n-1\}$.
For the example $\rho$ depicted above-left, the matrix $\matr{\rho}$
is shown above-centre.  The horizontal rule is a visual aid to
emphasise the onset of periodicity $\alpha=3$ and the period $\delta=9$.

Let $\rho$ be an $n$-wrapped run with period $\delta$ and onset of
periodicity $\alpha$.
Let $\beta=\alpha+\delta$, the time of first repetition.
Write $E$ for the domain $\{0,\ldots,\beta-1\}\times\{0,\ldots,n-1\}$
of the matrix $\matr\rho$. 
A \defn{trajectory} in $\matr\rho$ is a function (infinite sequence) $\N\to E$.
The \defn{left wrap trajectory} of $\rho$ is the trajectory
$\leftwraptraj{\rho}$ in $\matr\rho$ defined by the following
recursion:\footnote{See page~\pageref{left-wrap-informal} for a more
informal presentation.}
\begin{itemize}
\item $\leftwraptraj{\rho}(0)=(0,n-1)\,$\footnote{The top-right cell of the matrix.}
\item if $\leftwraptraj{\rho}(i)=(t,c)$ then $\leftwraptraj{\rho}(i+1)=
\begin{cases}
\big(\,t\,,\,(c-1)\remainder n\,\big)
   & \text{if }\matr{\rho}(t,c)=1 \\
\big(\,\alpha+[(t-\alpha+1)\remainder\delta]\,,\,(c-2)\remainder n\,\big) & \text{if }\matr{\rho}(t,c)=0
\end{cases}$
\end{itemize}
where $x\remainder y$ is the remainder upon dividing $x$ by $y$, \ie,
the unique $z\in\{0,\ldots,y-1\}$ such that
$z=x+yk$ for some $k\in \Z$.
The left wrap trajectory $\leftwraptraj{\rho}$ of our running example
$\rho$ is shown top-right (Figure~A): 
$\leftwraptraj\rho(0)$ to $\leftwraptraj\rho(6)$ are, in order,
$(0,5)$, $(0,4)$, $(1,2)$, $(1,1)$, $(2,5)$, $(2,4)$, $(3,2)$.
Note the `wrap' from $\leftwraptraj{\rho}(27)=(11,3)$ (the bottom
row) to $\leftwraptraj{\rho}(28)=(3,2)$ (the row below the horizontal
rule marking the onset of periodicity $\alpha=3$).

The \defn{start of cyclicity} in the left wrap trajectory $\leftwraptraj\rho$ is the least
index $a$ such that $\leftwraptraj{\rho}(a)=\leftwraptraj{\rho}(a+d)$ for some $d>0$.  The least such $d$
is the \defn{period} of the trajectory, and $a+d$ is the \defn{index
of first recurrence}.
In the running example (Figure~A), $a=7$ and $d=21$
($\leftwraptraj{\rho}(7)=\leftwraptraj{\rho}(7+21)=(3,1)$).

Write $\lefttrajcomp\rho$ for the composite of $\rho$ and
$\leftwraptraj\rho$ defined by
$\lefttrajcomp\rho(i)\,=\,\rho(\leftwraptraj\rho(i))$, listing the symbols
along the trajectory $\leftwraptraj\rho$.  In our example (Figure~A), 
$\lefttrajcomp\rho(0)\lefttrajcomp\rho(1)\lefttrajcomp\rho(2)\cdots$ is (from left to right)	
$$1010101
1111000
1111000
1111000
1111000\cdots$$
obtained by simply reading the symbols encountered along the arrows.

For $c\in \{0,\ldots,n-1\}$ define the \defn{$n$-wrapped decrement} $c^-$
as $c-1$ if $c>0$ and $n$ if $c=0$.\footnote{\Ie, $c^-=(c-1)\remainder n$.}
For $(t,c)\in\N\times\{0,\ldots,n-1\}$ define $(t,c)^-=(t,c^-)$.
Define $\lefttrajcomp\rho^-$ by
$\lefttrajcomp\rho^-(i)=\rho\big(\leftwraptraj\rho(i)^-\big)$, listing the
(wrapped-)left neighbours of the symbols in the trajectory
$\leftwraptraj\rho$.  In our example (Figure~A), 
$\lefttrajcomp\rho^-(0)\lefttrajcomp\rho^-(1)\lefttrajcomp\rho^-(2)\cdots$ is 
$010000111101011\cdots$.

Let $X$ be a $\{0,1\}$-word.  We shall define
$\{0,\varzero,\varone,1\}$-words $\leftseed X$ and $\leftstem X$
called the \defn{left seed} and the \defn{left stem}, respectively.
Let $\rho$ be the $n$-wrapped run with $\rho_0=X$. Let $d$ be the
period of the left wrap trajectory $\leftwraptraj\rho$, let $a$ be its
start of cyclicity, and let $b=a+d$, the index of first recurrence.
Let $T$ be the restriction of $\leftwraptraj\rho$ to the domain
$\{0,\ldots,b-1\}$.  Define the $\{1,0\varzero,0\varone\}$-word $U$ of
length $b$ by
$$U(i)\;\;=\;\;\begin{cases}
1 & \text{if }\lefttrajcomp\rho(i)=1 \\
0\,\underline{\mbox{\Large$\star$}} & \text{if }\lefttrajcomp\rho(i)=0\text{, where }\mbox{\Large$\star$}=\lefttrajcomp\rho^-(i)
\end{cases}$$
In our running example (Figure~A), $U$ is 
$$1(0\varone)1(0\varzero)1(0\varzero)1\;\;\;
1111(0\varone)(0\varzero)(0\varone)\;\;\;
1111(0\varone)(0\varzero)(0\varone)\;\;\;
1111(0\varone)(0\varzero)(0\varone)$$
of length 28, where gaps highlight repetition.
Recall that $a$ is the start of cyclicity.
Define the left stem $\leftstem X\:=\:\rev{(U_{<a})^\concat}\;$ (the
reverse of the flattening of $U_{<a}$) and the left seed $\leftseed
X\:=\:\red{\rev{(U_{\ge a})^\concat}}\:$ (the reduction of the reverse
of the flattening of $U_{\ge a}$).
In our example, $a=7$, so $U_{<a}=1(0\varone)1(0\varzero)1(0\varzero)1$ and $U_{\ge a}=
\big[1111(0\varone)(0\varzero)(0\varone)\big]^3$, hence
$$\leftstem X\;\;=\;\; 1\varzero 0 1 \varzero 0 1 \varone 0 1
\hspace*{15ex}
\leftseed X\;\;=\;\; \varone 0\varzero 0\varone 0 1111$$

\paragraph*{The right wrap construction.}
The \defn{right wrap trajectory} in the matrix $\matr\rho$ is the
function (infinite sequence) $\rightseedtraj{\rho}:\N\to E$ defined by
the following recursion.\footnote{See
page~\pageref{right-wrap-informal} for a more informal presentation.}
Recall that $\delta$ is the period and $\alpha$ is the time of first
repetition.
\begin{itemize}
\item $\rightwraptraj{\rho}(0)=(0,0)$;\vspace*{-1ex}
\item if $\rightwraptraj{\rho}(i)=(t,c)$ then $\rightwraptraj{\rho}(i+1)=
\begin{cases}
\big(\,t\,,\,(c+1)\remainder n\,\big)
   & \text{if }\matr{\rho}(t,c)=1 \\
\big(\,\alpha+[(t-\alpha+1)\remainder\delta]\,,\,(c+2)\remainder n\,\big) & \text{if }\matr{\rho}(t,c)=0
\end{cases}$
\end{itemize}
The \defn{start of cyclicity}, \defn{period} and \defn{index of first
recurrence} are defined as for the left wrap trajectory.
Write $\righttrajcomp\rho$ for the composite of $\rho$ and
$\rightwraptraj\rho$ defined by
$\righttrajcomp\rho(i)\,=\,\rho(\rightwraptraj\rho(i))$, listing the symbols
along the trajectory $\rightwraptraj\rho$.
For $c\in \{0,\ldots,n-1\}$ define the \defn{wrapped increment} $c^+$ as $c+1$ if $c<n$ and $0$ if $c=n$.
For $(t,c)\in\N\times\{0,\ldots,n-1\}$ define $(t,c)^+=(t,c^+)$.
Define $\righttrajcomp\rho^+$ by
$\righttrajcomp\rho^+(i)=\rho\big(\rightwraptraj\rho(i)^+\big)$, listing the
(wrapped-)right neighbours of the symbols in the trajectory
$\rightwraptraj\rho$.

Let $Y$ be a $\{0,1\}$-word.  We shall define
$\{0,\varzero,\varone,1\}$-words $\rightseed Y$ and $\rightstem Y$
called the \defn{right seed} and the \defn{right stem}, respectively.
Let $\rho$ be the $n$-wrapped run with $\rho_0=Y$. Let $d$ be the
period of the right wrap trajectory $\rightwraptraj\rho$, let $a$ be its
start of cyclicity, and let $b=a+d$, the index of first recurrence.
Let $T$ be the restriction of $\rightwraptraj\rho$ to the domain
$\{0,\ldots,b-1\}$.  Define the $\{\varone,\varzero 0,\varzero
1\}$-word $U$ of length $b$ by
$$U(i)\;\;=\;\;\begin{cases}
\underline 1 & \text{if }\righttrajcomp\rho(i)=1 \\
\underline 0\,\mbox{\Large$\star$} & \text{if }\righttrajcomp\rho(i)=0\text{, where }\mbox{\Large$\star$}=\righttrajcomp\rho^+(i)
\end{cases}$$
Define the right stem by $\rightstem X\:=\:(U_{<a})^\concat\;$ (the flattening of
$U_{<a}$) and the right seed by $\rightseed X\:=\:\red{(U_{\ge a})^\concat}\:$
(the reduction of the flattening of $U_{\ge a}$).

\subsection{The Wolfram-Cook Turing machine $M$}

Let $\Sigma_M=\{0,\varzero,\dunno,\varone,1\}$.  A
\defn{configuration} $\langle\tau,h,q\rangle$ of the Wolfram-Cook
Turing machine $M$ is a $\Sigma_M$-state $\tau$ (\ie, a function
$\tau:\Z\to\{0,\varzero,\dunno,\varone,1\}$) called the \defn{tape}, a
\defn{head position} $h\in\Z$, and a \defn{head state}
$q\in\{\circ,\bullet\}$.
Write $\mathcal{C}$ for the set of all configurations.\footnote{Thus
$\mathcal{C}=\{0,\varzero,\dunno,\varone,1\}^\Z\times\Z\times\{\circ,\bullet\}$
where $X^Y$ denotes the set of all functions from $Y$ to $X$.}
Write $\langle
\tau,h,q\rangle\rightsquigarrow\langle\tau',h',q'\rangle$ if the following conditions are satisfied (formalising the transition 
table on page~\pageref{table}):
\begin{itemize}
\item $\tau'(c)=\tau(c)$ for all $c\neq h$\,;\footnote{The head writes to cell $h$ only; all other cells remain the same.}
\item if $\tau(h)\in\{0,1\}$ then $h'=h-1$, otherwise $h'=h+1$\,;\footnote{The head moves left from $0$ and $1$, and right otherwise.}
\item if $\tau(h)\in\{1,\varone\}$ then $q'=\bullet$, otherwise $q'=\circ$\,;\footnote{The next state is $\bullet$ (`carry') iff the head reads $1$ or $\varone$.}
\item if $q=\circ$ then the ordered pair $\langle \tau(h),\tau'(h)\rangle$ is in 
$\{\langle 0,\varzero\rangle,\langle
1,\varone\rangle,\langle\varzero,0\rangle,\langle\varone,1\rangle,\langle\dunno,0\rangle\}$\,;\footnote{Corresponding to the first column in the table on page~\pageref{table}.}
\item if $q=\bullet$ then the ordered pair $\langle\tau(h),\tau'(h)\rangle$ is in 
$\{\langle 0,\varone\rangle,\langle
1,\dunno\rangle,\langle\varzero,0\rangle,\langle\varone,1\rangle,\langle\dunno,1\rangle\}$\,.\footnote{Corresponding
to the second column in the table on page~\pageref{table}.}
\end{itemize}
$M$ is deterministic: for any configuration $C$ there is a unique configuration $C'$
such that $C\rightsquigarrow C'$.
A \defn{run} of
$M$ is a function $\mu:\N\to\mathcal{C}$ such that
 $\mu(t)\rightsquigarrow\mu(t+1)$ for all $t$.
By determinism of $M$, the run is uniquely determined by the \defn{initial configuration} $\mu(0)$.
The \defn{initial tape}, \defn{initial head position} and
\defn{initial head state} are the tape, head position and head state
of the initial configuration.
Let $\mu$ be a run of $M$, and define the functions $\tau_\mu$,
$h_\mu$ and $q_\mu$ by
$\mu(t)=\langle\tau_\mu(t),h_\mu(t),q_\mu(t)\rangle$, the first of
which is the \defn{tape function}.
For $t>0$, if $h_\mu(t+1)=h_\mu(t-1)=h_\mu(t)-1$ the head \defn{switches left at
time $t$}
and if $h_\mu(t+1)=h_\mu(t-1)=h_\mu(t)+1$ the head \defn{switches
right at $t$}.
The head \defn{switches left at time\/ $0$} if $h_\mu(1)=h_\mu(0)-1$ (the first head
move is to the left).
The configurations $\mu(t)$ of a run $\mu$ are depicted on page~\pageref{emulation},
for $t$ such that the head switches left at time $t$.
The next figure (page~\pageref{interleaved}) in addition shows
the configurations $\mu(t)$ for $t$ such that the head switches right
at time $t$.

Let the set $L$ comprise those $t\in \N$ such that the head of $\mu$
switches left at time $t$, and write $L(i)$ for the $i\nth$ largest
element of $L$ (its smallest element taken to be $L(0)$, the $0\nth$).
The \defn{emulation} $\emulation{\mu}$ produced by a run $\mu$ of $M$
is the result of restricting the tape to times
when the head switches left, and reindexing the timestamps to be
sequential from $0$.  Formally (recalling that
$\tau_\mu:\Z\to\{0,\varzero,\dunno,\varone,1\}$ denotes the tape
function of $\mu$),
$\emulation{\mu}(t)\,=\,\tau_\mu(L(t))\,:\,\Z\to\{0,\varzero,\dunno,\varone,1\}$.\footnote{Note
that $\emulation\mu(i)$ may be undefined for some $i$, for example if
the head ``runs off to infinity'' at some point.}
The lower figure on page~\pageref{emulation} shows the sequence
$\emulation{\mu}(0)$,
$\emulation{\mu}(1)$,\ldots,$\emulation{\mu}(12)$.

\subsection{The Causal Future Emulation Theorem}

For any $\{0,1\}$-word $I$ of length $l$, a \defn{run from $I$} of the
rule 110 cellular automaton $R$ is a run $\rho$ whose initial
configuration has $I$ in cells $0,\ldots,l-1$, \ie,
$\rho_0(c)=I(c)$ for $c\in\{0,\ldots,l-1\}$.
The \defn{causal future} $\future{I}$ of a run from $I$ is the
set of pairs $\{\langle t,c\rangle\,:\, -t\le c< k+t\}\subseteq 
\N\times\Z$.\footnote{Note that this depends only on the integer $l$.}

\begin{center}
\begin{picture}(0,70)(0,-60)
\put(-100,0){\line(1,0){200}}
\put(103,-.5){$\ldots$}
\put(-115,-.5){$\ldots$}
\put(-30,0){\line(-1,-1){60}}
\put(30,0){\line(1,-1){60}}
\thicklines\put(-30,0.7){\line(1,0){60}}
\thicklines\put(-30,-0.1){\line(1,0){60}}
\thicklines\put(-30,-0.5){\line(1,0){60}}
\put(0,-35){\makebox(0,0){\shortstack{$\future{\codedtaginput}$ \\ $\vdots$}}}
\put(0,3.5){\makebox(0,0)[b]{$\codedtaginput$}}
\put(140,-35){\Large$\downarrow$ \normalsize time}
\end{picture}
\end{center}
In Wolfram and Cook's proof of the universality of $R$, the state of
the emulated universal cyclic tag system on (transformed) input $I$ is
recovered from the future $\future{\codedtaginput}$ alone.
We shall refer to this property as \defn{future sufficiency}.

Analogously, for the Turing machine $M$, a \defn{run of from $I$} is a
run $\mu$ of $M$ whose initial tape has $I$ in cells $0$ to $l-1$, in
other words, $\tau_\mu(0,c)=I(c)$ for $c\in\{0,\ldots,l-1\}$.

The following theorem allows us to emulate the causal future $\future
I$ in the rule 110 automaton $R$ on the Turing machine $M$.
\begin{theorem}[Causal Future Emulation]
Let $A,I,B$ be $\{0,1\}$-words of lengths $a,l,b$, respectively, with
$A$ and $B$ each containing $0$.
Let $\rho$ be the run of the
rule 110 cellular automaton $R$ from $I$ with initial state
$$\cells \cell{\repleft A}\; \cell I\; \cell{\repright B} \cellsend$$
Let $\mu$ be the run of the Wolfram-Cook Turing machine $M$ from $I$
with initial tape\footnote{Recall that $\leftseed A$, $\leftstem A$,
$\rightstem B$ and $\rightseed B$ were defined by the wrap
constructions.}
$$\repleft {\leftseed A}
\;\;
\leftstem{A}
\:\;I\;\,\rightstem B\,\;\repright {\rightseed B}$$
and head in initial state $\circ$ and initial position $l$ (the cell immediately to the right of $I$).
Recall that $\emulation\mu(i)$ denotes the tape function
$\tau_\mu(t):\Z\to\{0,\varzero,\dunno,\varone,1\}$ for the time $t$ of
the $i\nth$ occasion the head switches left.
Then
for all events $\langle t,c\rangle$ in the causal future $\future I\,$,
$$\rho(t,c)\;\;=\;\;\big(\emulation\tau(t)\big)(c)\;.$$
\end{theorem}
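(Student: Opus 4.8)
The plan is to prove the identity by induction on the time $t$, strengthening it into a statement about the entire configuration of $M$ at each right‑turnaround, i.e.\ at each time $L(t)$ when the head switches left. Let $\rho^A$ and $\rho^B$ be the wrapped runs of $R$ from $A$ and $B$ (on $a$ and $b$ cells, respectively). The invariant at stage $t$ should assert: (i) \emph{interior correctness} — over the slice $-t\le c<l+t$ the tape $\emulation\mu(t)$ equals $\rho_t$ after substituting $0$ for $\varzero$ and $1$ for $\varone$; (ii) \emph{boundary preservation} — every tape cell outside this slice still holds the symbol assigned by the initial seeds and stems, the head not having reached it yet; and (iii) the head sits just past the right end of the slice in state $\circ$. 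The base case is the first right‑turnaround: starting at cell $l$ in state $\circ$, the head only ever moves right through $\rightstem B$ and into $\repright{\rightseed B}$ until it reads the first un‑underlined symbol (which exists because $B$ contains $0$), so it has not yet touched cells $0,\dots,l-1$; hence $\emulation\mu(0)$ agrees with $I=\rho_0$ on $[0,l)$.

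For the inductive step inside the slice I would make the informal account of Section~\ref{plug} precise via the relation $\rightsquigarrow$. One left‑then‑right excursion advances the emulation by a single rule~110 step. On the leftward pass the head travels right‑to‑left and its state records, as a one‑bit carry, whether the cell it has just left (the right neighbour of the current one) was read as $1$; reading the current symbol against this carry, the head either commits the updated value or, when the outcome still hinges on the not‑yet‑seen left neighbour, defers by writing $\dunno$. The ensuing rightward pass revisits each $\dunno$ with its left neighbour now resolved and overwrites it. Checking the five rows and two columns of the transition table shows that the two passes jointly realise the \emph{Birth} and \emph{Death} clauses of the \emph{Causality} condition, exactly as in the displayed evolution on p.\,\pageref{emulation}; thus interior correctness passes from stage $t$ to stage $t+1$, the slice having grown by one cell at each end.

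The crux — and the main obstacle — is the boundary. When the head reaches the two ends of the growing slice it consumes cells taken from $\repleft{\leftseed A}$, $\leftstem A$ on the left and $\rightstem B$, $\repright{\rightseed B}$ on the right, and I must show these furnish precisely the values that border the causal future in the \emph{true} run $\rho$. The enabling observation is a backward‑light‑cone lemma: if $c<-t$ then $[c-t,c+t]\subseteq\{c:c<0\}$, so $\rho(t,c)$ depends only on $A$ and therefore coincides with $\rho^A$ at the matching wrapped coordinate; symmetrically $c\ge l+t$ forces dependence on $B$ alone and agreement with $\rho^B$. Hence the stream of boundary‑neighbour values the head will successively read is a stream of wrapped‑run entries, which the Periodicity Lemma shows is eventually periodic — this is exactly what makes a finite periodic seed sufficient, and also why the hypothesis that $A$ and $B$ contain a $0$ is needed: it guarantees that each seed carries the underlined (resp.\ un‑underlined) symbol the head must read in order to turn around.

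It remains to match this stream with the words produced by the wrap construction. For this I would prove a correspondence lemma: the left wrap trajectory $\leftwraptraj\rho$, read through $\lefttrajcomp\rho$ and its neighbour $\lefttrajcomp\rho^{-}$, visits exactly the spacetime events the head meets at the left end of successive sweeps, in the order it meets them. Its step rule is the geometric heart: on a $1$ it shifts one column left within the same time‑row, and on a $0$ it advances one row (modulo the period $\delta$, never below the onset $\alpha$) and shifts two columns left — mirroring both the one‑cell widening of the slice per excursion and the fact that a $0$ is encoded on the tape as the two symbols $0\,\underline\star$ (the cell together with its left neighbour, the two background cells actually consumed at the boundary). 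Splitting the resulting word $U$ at its start of cyclicity into the stem $U_{<a}$ and the repeated block $U_{\ge a}$, then applying reversal and the reduction $\red{\cdot}$, reproduces $\leftstem A$ and $\leftseed A$ and shows the tape delivers the correct symbol at the correct moment; the right construction is the mirror image. Combining the interior step with this boundary analysis, the invariant survives the induction, and reading off clause (i) at events of $\future I$ yields $\rho(t,c)=\big(\emulation\mu(t)\big)(c)$.
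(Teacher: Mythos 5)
Two comments before the substance: in the copy of the paper at hand, the body of Section~5 (the paper's own proof) is elided, so your proposal can only be judged against the machinery the paper sets up in Section~4; measured that way, your architecture is the intended one. You use exactly the paper's apparatus --- the emulation $\emulation\mu$ read off at left switches, the wrapped runs and the Periodicity Lemma, the trajectories $\leftwraptraj\rho,\rightwraptraj\rho$ and the stems/seeds --- and your two pillars are the right ones: a backward-light-cone lemma (for $c<-t$ the value $\rho(t,c)$ is determined by $A$ alone, hence equals the corresponding wrapped-run entry, and symmetrically on the right), and a correspondence lemma identifying the wrap trajectory with the ordered stream of background cells the head consumes. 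You also invoke the hypothesis that $A$ and $B$ contain $0$ at exactly the point where it is needed (existence of a turnaround symbol).

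There is, however, a concrete flaw: your induction invariant is false as stated, so the induction would not close. Clauses (ii) and (iii) assert that at the $t\nth$ left switch every cell outside the slice $-t\le c<l+t$ still holds its initial seed/stem symbol, the head ``not having reached it yet,'' and that the head sits just past the right end of the slice. But the head necessarily overruns the causal-future slice: a leftward sweep ends only upon reading an underlined symbol, so it ploughs through every un-underlined background cell (a run of $1$'s and then a $0$) before turning at the $\underline{\star}$ of a $0\,\underline{\star}$ pair, and symmetrically a rightward sweep runs through underlined background cells until it hits an un-underlined one. Already at the base case of your own running example ($\rightstem B=\varone\varone$, $\rightseed B=\varzero 1$): the first left switch occurs at cell $l+3$, after cells $l,l+1,l+2$ have been rewritten from $\varone\varone\varzero$ to $110$ --- all outside the slice $[0,l)$. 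So at every stage the \emph{processed region} (turnaround position to turnaround position) is strictly larger than the causal-future slice, and the buffer cells between them have been modified; for the same reason your phrase ``one-cell widening of the slice per excursion'' misdescribes the geometry (the left end of the processed region advances by a variable amount: the number of consumed $1$'s plus two). The repair is to state the invariant over the processed region: $\emulation\mu(t)$ agrees with $\rho_t$, modulo erasing underlines, on that \emph{entire} region, and is untouched seed beyond it --- where the meaning of $\rho_t$ on the buffer is supplied by your light-cone lemma (there $\rho_t$ equals the wrapped run, which is precisely what the seeds encode via the trajectory). All ingredients for this repair are present in your third and fourth paragraphs, but they must be welded into the invariant itself; with the invariant as written, clause (ii) is refuted at stage $0$. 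A lesser point in the interior step: the claim that the rightward pass finds each $\dunno$ ``with its left neighbour now resolved'' needs care, since the state after reading $\varone$ does not record that cell's \emph{old} value (in the birth case the old value was $0$); correctness rests on the case analysis showing that a $\dunno$ can only be preceded by $\varone$ (old value $0$) or $\dunno$ (old value $1$), for which the table's inverted state assignment gives the right answer --- your appeal to a finite check of the table does cover this, but the wording suggests a mechanism that is not the one actually at work.
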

The proof of the theorem is the subject of the next section.
Universality is an immediate corollary:
\begin{theorem}[Universality]
Wolfram and Cook's 2-state 5-symbol Turing machine is universal.
\end{theorem}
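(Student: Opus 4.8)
The plan is to exhibit $M$ as a simulator of the very universal cyclic tag system $U$ that Wolfram and Cook use to establish the universality of the rule 110 cellular automaton $R$, by composing their $R$-to-$U$ simulation with the Causal Future Emulation Theorem. First I would fix, once and for all, the universal cyclic tag system $U$, the computable encoding $J\mapsto\codedtaginput(J)$ of inputs $J$ to $U$ as $\{0,1\}$-words, and the fixed seed words $X$ and $Y$ that Wolfram and Cook employ. As noted in Section~\ref{plug}, these seeds remain constant across all inputs $J$; being genuine rule 110 backgrounds, each contains $0$ and satisfies the precondition of the wrap construction (no constructed row is all $0$), so $\leftseed X$, $\leftstem X$, $\rightstem Y$ and $\rightseed Y$ are well defined.

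Next, for an arbitrary input $J$, I would set $I=\codedtaginput(J)$ and invoke the Causal Future Emulation Theorem with $A=X$ and $B=Y$. The theorem supplies a run $\mu$ of $M$ from the wrap-constructed initial tape $\repleft{\leftseed X}\;\leftstem X\;I\;\rightstem Y\;\repright{\rightseed Y}$, with head in state $\circ$ at position $l$, whose emulation $\emulation\mu$ agrees, at every event of the causal future $\future I$, with the run $\rho$ of $R$ from $\repleft X\;I\;\repright Y$. The encoding $J\mapsto\big(\repleft{\leftseed X}\;\leftstem X\;\codedtaginput(J)\;\rightstem Y\;\repright{\rightseed Y}\big)$ is computable, since $\codedtaginput$ is a fixed computable substitution and the four words $\leftseed X$, $\leftstem X$, $\rightstem Y$, $\rightseed Y$ are constant data produced once by the wrap construction.

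The decisive ingredient is \emph{future sufficiency}: Wolfram and Cook recover the entire computation of $U$ on $J$ from the causal future $\future{\codedtaginput}$ alone, which is exactly the region on which the theorem guarantees that $\emulation\mu$ and $\rho$ coincide. Hence the same decoding procedure that extracts the computation of $U$ from $\rho$ on $\future I$ extracts it, unchanged, from $\emulation\mu$ on $\future I$; thus $M$ simulates $U$. Since $U$ is universal and both the encoding $J\mapsto(\text{initial tape})$ and the decoding are computable, $M$ is universal.

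I expect the only delicate point to be the interface between the two imported results: one must confirm that the decoding in Wolfram and Cook's $R$-to-$U$ simulation reads \emph{only} cells lying in $\future I$, since it is precisely there --- and a priori nowhere else --- that $M$ faithfully reproduces $R$. This is exactly the assertion of future sufficiency, so once that property is granted the corollary is immediate; the residual obligations (computability of the two translations and verification that $X$ and $Y$ meet the hypotheses of the theorem and the wrap construction) are routine.
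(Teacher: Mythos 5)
Your proposal is correct and takes essentially the same route as the paper: the paper's proof is precisely to instantiate the Causal Future Emulation Theorem with $A$ and $B$ being the Wolfram--Cook seed words and then conclude by future sufficiency. Your additional remarks (computability of the encoding/decoding, and that the Wolfram--Cook words contain $0$ and avoid the all-zeros row excluded by the wrap construction) are exactly the routine checks the paper leaves implicit.
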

\begin{proof}
Take $A$ and $B$ in the Causal Future Emulation Theorem to be the
infinitely repeated words used by Wolfram and Cook to simulate a
cyclic tag system in the rule 110 automaton $R$ \cite{C,W}.
Universality follows from future sufficiency.
\end{proof}

\section{Proof of the Causal Future Emulation Theorem}

Before working through this proof, the reader may wish to study the
figure on page~\pageref{leftright} which shows a causal future
emulation at both left- and right switches of the head.

[\ldots]

\subsection*{Acknowledgement}
Many thanks to Rob van Glabbeek for detailed feedback.  Thanks also to
Vaughan Pratt.

\small

\end{document}